\renewcommand{\theequation}{\thesection.\arabic{equation}}
\newtheorem{thm}{Theorem}[section]
\newtheorem{lem}[thm]{Lemma}
\newtheorem{prop}[thm]{Proposition}
\newtheorem{defn}{Definition}
\newtheorem{rmk}[thm]{Remark}
\renewcommand{\Im}{{\mbox{Im}}}
\newcommand{\Real}{\mathbb R}
\renewcommand\appendix{\par
  \setcounter{section}{0}
  \setcounter{subsection}{0}
  \setcounter{figure}{0}
  \setcounter{table}{0}
  \renewcommand\thesection{Appendix \Alph{section}}
  \renewcommand\theequation{\Alph{section}.\arabic{equation}}
  \renewcommand\thefigure{\Alph{section}.\arabic{figure}}
  \renewcommand\thetable{\Alph{section}.\arabic{table}}
  \renewcommand\thethm{\Alph{section}.\arabic{thm}}
}
\numberwithin{equation}{section}
\date{}
\title{SCAN-MUSIC: An Efficient Super-resolution Algorithm for Single Snapshot Wide-band Line Spectral Estimation }
 \author{
Zetao Fei
\thanks{Department of Mathematics, 
HKUST,  Clear Water Bay, Kowloon, Hong Kong (zfei@connect.ust.hk).}
and Hai Zhang
\thanks{Department of Mathematics, 
HKUST,  Clear Water Bay, Kowloon, Hong Kong (haizhang@ust.hk).}
}
\begin{document}

\maketitle

\begin{abstract}
We propose an efficient algorithm for reconstructing one-dimensional wide-band line spectra from their Fourier data in a bounded interval $[-\Omega,\Omega]$. While traditional subspace methods such as MUSIC achieve super-resolution for closely separated line spectra, their computational cost is high, particularly for wide-band line spectra. To address this issue, we proposed a scalable algorithm termed SCAN-MUSIC that scans the spectral domain using a fixed Gaussian window and then reconstructs the line spectra falling into the window at each time. For line spectra with cluster structure, we further refine the proposed algorithm using the annihilating filter technique. 
Both algorithms can significantly reduce the computational complexity of the standard MUSIC algorithm with a moderate loss of resolution. Moreover, in terms of speed, their performance is comparable to the state-of-the-art algorithms, while being more reliable for reconstructing line spectra with cluster structure. 
The algorithms are supplemented with theoretical analyses of error estimates, sampling complexity, computational complexity, and computational limit. 

\end{abstract}

\section{Introduction}{\label{sec: introduction}}
Wide-band signal processing has gained significant attention in various fields including audio and speech processing \cite{ambikairajah2001wideband}, wireless communications \cite{shen2006ultra}, array processing \cite{krolik1990focused}\cite{liu2010wideband}, medical imaging\cite{do2018cardiac}, and so on. Various strategies are studied to achieve wide-band signal reconstruction, see e.g. \cite{wang1985coherent}\cite{di2001waves}\cite{yoon2006tops}\cite{chen2002maximum}\cite{valaee1995wideband}\cite{mishali2010theory}\cite{zheng2019radar}\cite{cevher2008sparse} and references therein. In this paper, we consider the wide-band line spectra estimation (LSE) problem. We assume the line spectra are spread over some interval $[-R, R]$ with $R \gg 1$. Mathematically, the problem can be formulated as follows.
Let $\nu=\sum_{j=1}^n a_j\delta_{y_j}$ be a discrete measure, where $y_j\in [-R,R]$ represents the support of the line spectra and $a_j$ the corresponding amplitudes. The noisy measurement is given by the sampled band-limited Fourier data:
\begin{align}{\label{measurement}}
Y(\omega_k) = \sum_{j=1}^{n} a_je^{iy_j\omega_k}+W(\omega_k),\ \omega_k\in[-\Omega,\Omega]
\end{align}
where $\omega_k = \frac{k}{K}\Omega$, $k = -K,\cdots,K$, are the uniform sampling points over the interval $[-\Omega,\Omega]$, $\Omega$ is the cutoff frequency and $W(\omega_k)$'s are the noise terms. The sampling step size is given by $h = \frac{\Omega}{K}$. Assume that $|W(\omega_k)|<\sigma$ for all $k$ with $\sigma$ denoting the noise level. Notice that the Rayleigh limit of the system given by (\ref{measurement}) is $\frac{\pi}{\Omega}$. We define the density of line spectra as 
\begin{align}
    \rho:= \frac{n}{2R}\cdot\frac{\pi}{\Omega},
\end{align}
which gives the average number of spectra per unit Rayleigh limit. We do not assume any specific noise pattern throughout the paper.\\

Different regimes have been studied to address the problem of line spectral estimation. When line spectra are extremely sparsely positioned over the real axis, the Sparse Fourier Transform (SFT) offers an efficient method for computing the Discrete Fourier transform using only a subset of the sampling data. Estimating the spectral position plays an important role in the algorithms. During its early development, SFT achieves computational complexity $n\log^{\mathcal{O}(1)}K$, where $n$ is the sparsity of the line spectrum and $K$ is the size of the signal, see e.g. \cite{gilbert2005improved}. All the algorithms are random algorithms with failing probability. Over the recent decade, SFT has been extensively researched. The efficient, stable, and accessible random algorithms can be found in various papers e.g. \cite{indyk2014nearly}\cite{price2015robust}\cite{chen2016fourier}. A survey on the main techniques and more details can be found in \cite{gilbert2014recent}. For deterministic algorithms of SFT, the state-of-art computational complexity for line spectral with no structure is of the order $n^2\log^{\mathcal{O}(1)}K$, see e.g. \cite{akavia2010deterministic}\cite{iwen2013improved}. There is another group of deterministic SFT algorithms called the Prony-based algorithm that requires $\mathcal{O}(n^3)$ computational complexity, see \cite{heider2013sparse}\cite{potts2016efficient}. The recent progress on the SFT algorithms can be found in e.g. \cite{bittens2019deterministic}\cite{choi2021high}.\\

When the separation distance between line spectra is multiple of the Rayleigh limit, the sparsity-exploiting algorithms are applicable. In \cite{candes2014towards}, \cite{candes2013super}, the TV minimization algorithm is proposed and the authors derive the reconstruction error bound given the separation distance is above 4 Rayleigh limits. The separation condition is further improved under certain conditions in \cite{10.1093/imaiai/iaw005}. We refer other sparsity-exploiting algorithms, such as Atomic Norm minimization, SBL, etc., to \cite{tang2014near}, \cite{tang2013compressed}, \cite{chi2020harnessing}, \cite{duval2015exact}, \cite{AZAIS2015177}, \cite{hansen2014sparse}, \cite{1315936} and the references therein. We also notice that in the recent work \cite{hansen2018superfast}, based on the Bayesian view, the authors develop an efficient algorithm by exploiting superfast Toeplitz inversion. \\

When the minimum separation distance approaches or even falls below the Rayleigh limit, classical subspace methods demonstrate excellent performance. Subspace methods, such as MUSIC \cite{1143830}, ESPRIT \cite{32276}, and Matrix Pencil \cite{hua1990matrix}, leverage eigen-decomposition to achieve super-resolution capabilities. The efficiency of such methods, however, may become a problem when the number of line spectra is large. The practical application of the MUSIC algorithm is hindered by its high computational complexity, typically $\mathcal{O}(K^3)$ where $K$ is the sampling complexity. Though the Matrix Pencil and ESPRIT are more efficient, they are very sensitive to the prior information on the source number. One of the goals of this paper is to speed up the standard MUSIC algorithm by reducing its computational complexity to $\mathcal{O}(K^2\log K)$. \\

Recent research has delved into studying the stability of subspace methods in various scenarios, as evident in works like \cite{LI2021118} and \cite{li2019super}. Interested readers can refer to \cite{batenkov2019super}, \cite{donoho1992superresolution}, \cite{demanet2015recoverability}, \cite{Moitra:2015:SEF:2746539.2746561}, \cite{9410626}, and \cite{LIU2022402} for detailed discussions on computational resolution limits.
We highlight that the theoretical results depend on crucial factors such as the signal-to-noise ratio (SNR) and the number of line spectra, which are discussed in Section \ref{subsec: tradeoff}.\\

\subsection{Our Contribution}\label{subsec: contribution}

The contribution of this work is to propose a scalable efficient algorithm for the wide-band Line Spectral Estimation (LSE) problem. This is achieved in the regime when the number of spectra is large and the average distance between neighboring spectra is close to or comparable to the Rayleigh limit. The proposed algorithm is based on the strategy of divide-and-conquer; it first divides the whole problem into multiple subproblems by centralization and Gaussian windowing and then solves each subproblem using the MUSIC algorithm, and hence is termed SCAN-MUSIC. We applied the algorithm to the case when the spectra are randomly distributed with an average separation distance above the Rayleigh limit.  We show that SCAN-MUSIC has the computational complexity $\mathcal{O}(n^2\log n)$ and the optimal sampling complexity $\mathcal{O}(n)$ under a proper choice of parameters.\\

For the case when the line spectra are clustered with separation distances smaller than the Rayleigh limit within clusters. We refine SCAN-MUSIC by incorporating the technique of annihilating filters and term it SCAN-MUSIC(C). SCAN-MUSIC(C) has optimal sampling complexity $\mathcal{O}(n)$ and computational complexity $\mathcal{O}(Tn\log n)$ 
where $T$ is the number of clusters. \\

We supplement the proposed algorithms with theoretical analysis of error estimates, sampling complexity, computational complexity, and computational limit. Numerical experiments are also conducted to demonstrate the efficiency of the proposed algorithms. It is shown that the proposed algorithm is comparable in speed to the state-of-the-art algorithm proposed in \cite{hansen2018superfast}. Moreover, it has unique strength in reconstructing line spectra with cluster structure where the separation distance between spectra is below the Rayleigh limit within clusters. We want to point out that both proposed algorithms can achieve robust reconstruction that is close to the theoretical computational limit. Moreover, they can
be sped up by parallel computing techniques. Compared to the algorithm in \cite{hansen2018superfast}, they are less sensitive to the statistics of noise. \\

Notably, in a recent study \cite{liu2022measurement}, the authors also consider the LSE problem with cluster structure; however, they adopt a different approach based on a downsampling strategy and multipole expansion trick.

\subsection{Organization of the paper}
In Section \ref{sec: random spectra reconstruction}, we explain the main idea of SCAN-MUSIC and provide theoretical discussion on the computational limit, sampling and computational complexity, and error estimates. In Section \ref{sec: cluster spectra}, we consider the LSE problem for clustered line spectra. Details of the implementation of the above two algorithms are provided in Section \ref{sec: details}. In Section \ref{sec: numerical study}, the numerical results are presented. The paper is concluded by discussions in Section \ref{sec: discussion}.

\subsection{Assumptions and notations}
Throughout the paper, we assume that $K\ge n$ and that the sampling step size follows the Nyquist-Shannon criterion, i.e. $h\le\frac{\pi}{R}$. We point out that we do not assume any specific pattern of the noise pattern. \\

For ease of notation, we denote 
\begin{align}\label{fourier data function}
    f(\omega) = \sum_{j=1}^{n} a_je^{iy_j\omega},
\end{align}
and for a slight abuse of notation, we denote
\begin{align}
    Y(\omega) = f(\omega)+W(\omega).
\end{align}
We denote $m_{\min} = \min_{j}|a_j|$, and $d_{\min} = \min_{s\ne t} |y_s-y_t|$.
We denote the Fourier transform operator as $\mathcal{F}$, and its inverse as $\mathcal{F}^{-1}$. The Fourier transform used in this paper is defined as $\mathcal{F}g(x) = \frac{1}{2\pi}\int_\Real g(\omega)\cdot e^{-ix\omega} d\omega$ for some function $g$. Let $\chi_{A}$ be the characteristic function of some set $A$. We denote $\|\cdot\|_{TV}$ the total variation norm and $\|\cdot\|_{\infty}$ the $L^{\infty}$ norm. Let $\Phi(x) = \int_{-\infty}^x e^{-t^2}dt$. 
For readers' convenience, we list the parameters used in the introduction of algorithms in Section \ref{subsec: notations}.\\

Finally, we clarify the difference between the subsampling strategy and downsampling strategy that occur in this paper: subsampling strategy is the strategy enlarging the sampling step size; downsampling strategy is the strategy sampling in a subinterval in $[-\Omega,\Omega]$.

\section{Reconstruction of Random Line Spectra}\label{sec: random spectra reconstruction}
In this section, we consider the LSE problem for the case that the line spectra are randomly separated on the interval $[-R, R]$ with no specific structure. In Section \ref{subsec: windowing and centralization}, we present the main idea of centralization and Gaussian windowing. Based on these two techniques, the SCAN-MUSIC method is proposed, in Section \ref{subsec: scan-music}. The sampling complexity is discussed in Section \ref{subsec: sampling complexity}. In Section \ref{subsec: random-compute-complexity}, we discuss the computational complexity of SCAN-MUSIC. We consider the application of SCAN-MUSIC in the super-sparse regime in Section \ref{subsec: super-sparse regime}. Finally, the trade-off between computational complexity and resolution loss is discussed in Section \ref{subsec: tradeoff}.
\subsection{Centralization and Gaussian windowing}\label{subsec: windowing and centralization}
Before going to the theoretical analysis of centralization and Gaussian windowing, we first explain the idea behind the SCAN-MUSIC proposed in Section \ref{subsec: scan-music}.\\

In the mathematical model (\ref{measurement}), the point spread function of the system is given by 
\begin{align}
    p(x) := \mathcal{F}\chi_{[-\Omega,\Omega]} = \frac{1}{2\pi}\int_{\Real} \chi_{[-\Omega,\Omega]}e^{-ix\omega} d \omega= \frac{1}{\pi x}\sin{\Omega x}.
\end{align}
The decaying rate of $p(x)$ is of the order $\mathcal{O}(\frac{1}{|x|})$. It implies that there exists strong interference between signals generated by different line spectra especially when the line spectra are not well separated. To address this issue, we introduce Gaussian windowing for the measurement. We define the normalized Gaussian kernel with parameter $\lambda$ in the frequency domain as follows:
\begin{align}
    G_{\lambda}(\omega) =\sqrt{\frac{\lambda}{\pi}} \cdot e^{-\lambda \omega^2}, \quad \omega \in \Real.
\end{align}
The Fourier transform of $G_{\lambda}(\omega)$ is given by
\begin{align}
    \mathcal{F}G_{\lambda}(x) = \sqrt{\frac{\lambda}{\pi}}\int_{\Real} e^{-\lambda \omega^2}e^{-i\omega x} d\omega = e^{-\frac{x^2}{4\lambda}}.
\end{align}
We expect that after the windowing, the line spectra that are far away from the origin are ``quenched''. To reconstruct the line spectra near $\mu\in [-R,R]$, a centralization step is needed. We define the centralization operator $\mathcal{S}_{\mu}:\mathcal{C(\Real)}\rightarrow\mathcal{C(\Real)}$ by
\begin{align}
    \mathcal{S}_{\mu}g = g \cdot e^{-i\mu \omega},
\end{align}
where $\mu\in [-R,R]$.\\

\begin{figure}
\centering
    \includegraphics[width= 0.8\textwidth]{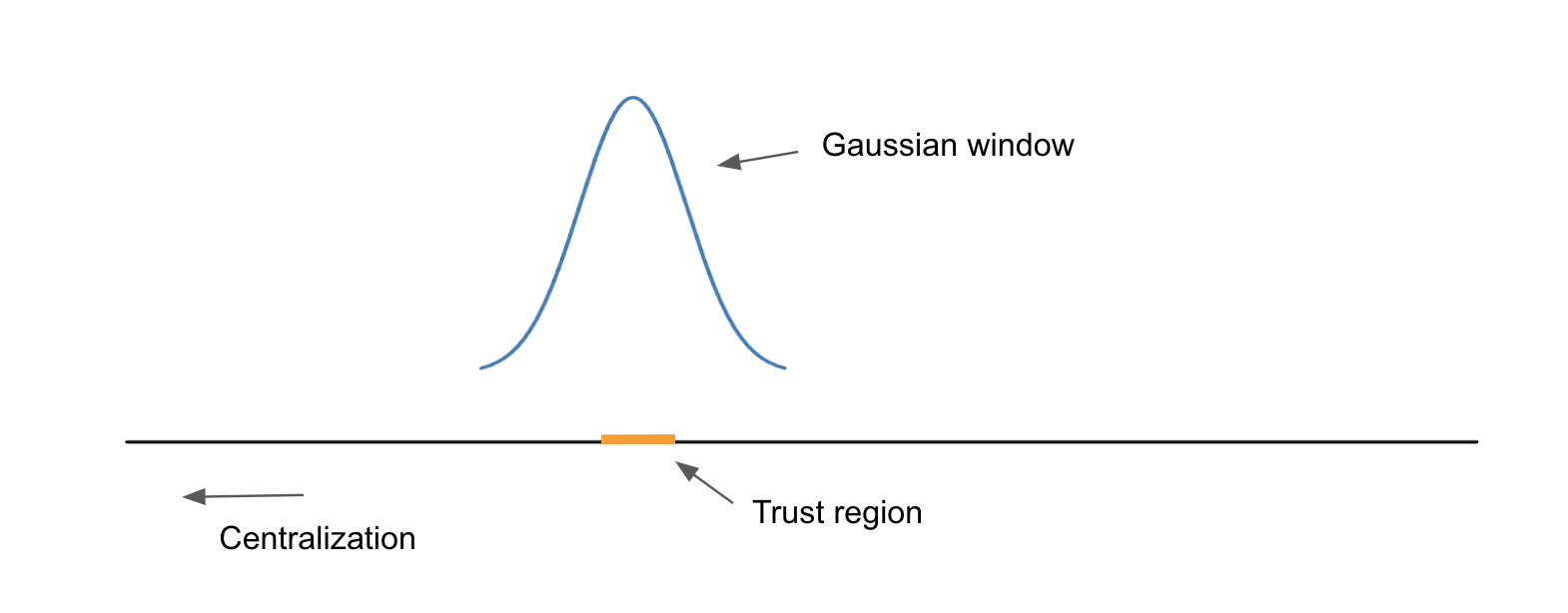}
\caption{Sketch of centralization and Gaussian windowing in the spatial domain.}
\label{Fig: windowing-centralization}
\end{figure}

Let $f$ be defined in (\ref{fourier data function}). By Lemma \ref{lem: gauss-conv}, the line spectra after centralization and Gaussian windowing can be characterized by the following equation:
\begin{align}\label{center-windowing}
    \left(\mathcal{S}_{\mu}(f\chi_{[-\Omega,\Omega]})\right) \ast G_{\lambda}
   & = \left(\sum_{j=1}^n a_{j}e^{i(y_{j}-\mu)\omega}\right) \ast G_{\lambda} + \mathcal{E}_{mod}(\omega)\notag\\
   & = \sum_{j=1}^{n} a_j e^{-\frac{(y_j-\mu)^2}{4\lambda}}e^{i(y_j-\mu)\omega} + \mathcal{E}_{mod}(\omega),
\end{align}
where $\mathcal{E}_{mod}$ is defined by 
\begin{align}\label{model-error}
    \mathcal{E}_{mod}(\omega) = \mathcal{S}_{\mu}\left(f\cdot\left(\chi_{[-\Omega,\Omega]}-1\right)\right) \ast G_{\lambda}(\omega).
\end{align}
We notice that the factor $e^{-\frac{(y_j-\mu)^2}{4\lambda}}$ in (\ref{center-windowing}) implies that the line spectra far away from the chosen center, $\mu$, are indeed ``quenched''. Therefore, the centralization and Gaussian windowing decouple the line spectra around the center and the ones that are sufficiently far away. Figure \ref{Fig: windowing-centralization} illustrates the procedure of centralization and Gaussian windowing.\\

We call $\mathcal{E}_{mod}$ in (\ref{model-error}) the model error, which is caused by the fact that only the band-limited data is available. 
The model error is estimated in the following theorem.
\begin{thm} For $0<\epsilon<1$ and $\omega\in [(-1+\epsilon)\Omega,(1-\epsilon)\Omega]$, 
\begin{align}
    |\mathcal{E}_{mod}(\omega)| \le \frac{\|\nu\|_{TV}}{\sqrt{\lambda}}\left( \Phi\left(-\sqrt{\lambda}\epsilon\Omega\right)+\Phi\left(\sqrt{\lambda}(-2\Omega+\epsilon\Omega)\right) \right).
\end{align}
\end{thm}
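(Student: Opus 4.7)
The plan is to prove the bound by a direct pointwise estimate starting from the definition of $\mathcal{E}_{mod}$. Writing the convolution explicitly and using that $\chi_{[-\Omega,\Omega]}-1 = -\chi_{\Real\setminus[-\Omega,\Omega]}$, I would first write
\[
\mathcal{E}_{mod}(\omega) \;=\; -\int_{|\omega'|>\Omega} f(\omega')\, e^{-i\mu\omega'}\, G_{\lambda}(\omega-\omega')\, d\omega'.
\]
Because $|e^{iy_j\omega'}| = |e^{-i\mu\omega'}| = 1$, the triangle inequality gives $|f(\omega')| \le \sum_j |a_j| = \|\nu\|_{TV}$ uniformly in $\omega'$. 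Pulling this constant outside the integral, the estimate reduces to bounding the two Gaussian tails
\[
|\mathcal{E}_{mod}(\omega)| \;\le\; \|\nu\|_{TV}\left(\int_{\Omega}^{\infty} G_{\lambda}(\omega-\omega')\,d\omega' \;+\; \int_{-\infty}^{-\Omega} G_{\lambda}(\omega-\omega')\,d\omega'\right).
\]

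Next, I would compute each Gaussian tail by the substitution $u = \sqrt{\lambda}(\omega'-\omega)$. A short calculation, keeping careful track of orientation, turns the right tail into a multiple of $\Phi\bigl(-\sqrt{\lambda}(\Omega-\omega)\bigr)$ and the left tail into a multiple of $\Phi\bigl(-\sqrt{\lambda}(\Omega+\omega)\bigr)$, where the prefactor arises from the normalization of $G_{\lambda}$ and matches the $1/\sqrt{\lambda}$ appearing in the claimed bound.

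Finally, I would establish the uniform bound over $\omega\in[(-1+\epsilon)\Omega,(1-\epsilon)\Omega]$ by a monotonicity argument. Setting
\[
h(\omega) \;:=\; \Phi\bigl(-\sqrt{\lambda}(\Omega-\omega)\bigr) + \Phi\bigl(-\sqrt{\lambda}(\Omega+\omega)\bigr),
\]
one checks that $h$ is even in $\omega$ and that for $\omega>0$,
\[
h'(\omega) \;=\; \sqrt{\lambda}\Bigl(e^{-\lambda(\Omega-\omega)^2} - e^{-\lambda(\Omega+\omega)^2}\Bigr) \;>\; 0,
\]
since $(\Omega-\omega)^2 < (\Omega+\omega)^2$. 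Hence on the symmetric interval $[(-1+\epsilon)\Omega,(1-\epsilon)\Omega]$ the maximum of $h$ is attained at the endpoints $\omega = \pm(1-\epsilon)\Omega$, where its value equals $\Phi(-\sqrt{\lambda}\epsilon\Omega) + \Phi\bigl(\sqrt{\lambda}(-2\Omega+\epsilon\Omega)\bigr)$. This is precisely the stated bound.

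There is no serious obstacle here; the proof is essentially bookkeeping. The only subtlety is the monotonicity step: one must recognize that the two Gaussian tails attain their individual suprema at opposite ends of the interval and therefore cannot both blow up simultaneously, which is why one tail contributes the mild factor $\Phi(-\sqrt{\lambda}\epsilon\Omega)$ (encoding loss of resolution near the edge) while the other is exponentially smaller, $\Phi\bigl(\sqrt{\lambda}(-2\Omega+\epsilon\Omega)\bigr)$. Care with sign conventions for $\Phi$ and with the orientation of the $u$-substitution is all that is really needed.
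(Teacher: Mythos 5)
Your proof follows essentially the same route as the paper's: bound $|f|$ by $\|\nu\|_{TV}$, reduce to the two Gaussian tails, express them via $\Phi$, and then use evenness and monotonicity of the resulting function of $\omega$ to evaluate at the endpoint $\omega=(1-\epsilon)\Omega$. Your explicit computation of $h'(\omega)$ is in fact slightly more careful than the paper, which only asserts the monotonicity.

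One caveat on the constant: the substitution $u=\sqrt{\lambda}(\omega'-\omega)$ applied to $\int G_\lambda$, with $G_\lambda(\omega)=\sqrt{\lambda/\pi}\,e^{-\lambda\omega^2}$ and $\Phi(x)=\int_{-\infty}^x e^{-t^2}dt$, produces the prefactor $1/\sqrt{\pi}$, not $1/\sqrt{\lambda}$, so your claim that the normalization ``matches the $1/\sqrt{\lambda}$ appearing in the claimed bound'' is not justified as written. The paper has the same inconsistency -- its theorem states $\|\nu\|_{TV}/\sqrt{\lambda}$ while its own proof ends with $\|\nu\|_{TV}/\sqrt{\pi}$ -- and since in practice $\lambda\sim 10^2>\pi$, the version with $1/\sqrt{\lambda}$ is strictly stronger than what this argument establishes; the correct constant is $1/\sqrt{\pi}$.
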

\begin{proof}
   By the definition of $\mathcal{E}_{mod}$,
   \begin{align}\label{E_mod}
       |\mathcal{E}_{mod}(\omega)|
       & = \sqrt{\frac{\lambda}{\pi}}\left|\sum_{j = 1}^n a_j \left( \int_{\Omega}^{+\infty} e^{i(y_j-\mu)\eta}\cdot e^{-\lambda(\omega-\eta)^2} d\eta + \int_{-\infty}^{-\Omega} e^{i(y_j-\mu)\eta}\cdot e^{-\lambda(\omega-\eta)^2} d\eta\right)\right|\notag \\
       & \le \sqrt{\frac{\lambda}{\pi}} \|\nu\|_{TV} \cdot \left( \int_{\Omega}^{+\infty} e^{-\lambda(\omega-\eta)^2} d\eta + \int_{-\infty}^{-\Omega} e^{-\lambda(\omega-\eta)^2} d\eta\right)\notag\\
       & = \frac{\|\nu\|_{TV}}{\sqrt{\pi}}\left( \Phi\left(\sqrt{\lambda}(-\Omega+\omega)\right)+\Phi\left(\sqrt{\lambda}(-\Omega-\omega)\right) \right).
   \end{align}
   It is clear that the right-hand side of (\ref{E_mod}) is an even function of $\omega$ and is increasing for $\omega\in [0,(1-\epsilon)\Omega]$. Therefore, we have
   \begin{align}
       |\mathcal{E}_{mod}(\omega)| \le \frac{\|\nu\|_{TV}}{\sqrt{\pi}}\left( \Phi\left(-\sqrt{\lambda}\epsilon\Omega\right)+\Phi\left(\sqrt{\lambda}(-2\Omega+\epsilon\Omega)\right) \right).
   \end{align}
\end{proof}
To characterize the effective cutoff frequency after centralization and Gaussian windowing, we introduce the following definition:
\begin{defn}\label{def: omega_win}
    Let $\sigma>0$ be the noise level, the effective cutoff frequency $\Omega_{win}$ is 
    \begin{align}\label{omega_win}
        \Omega_{win} := \Omega \cdot \left (1 - \inf_{0<\epsilon< 1} \left\{ \Phi\left(-\sqrt{\lambda}\epsilon\Omega\right)+\Phi\left(\sqrt{\lambda}(-2\Omega+\epsilon\Omega)\right)<\frac{\sqrt{\pi}}{\|\nu\|_{TV}}\sigma  \right\}\right).
    \end{align}
\end{defn}
\begin{rmk}
    $\Omega_{win}$ is well-defined since the function $H(\epsilon) = \Phi\left(-\sqrt{\lambda}\epsilon\Omega\right)+\Phi\left(\sqrt{\lambda}(-2\Omega+\epsilon\Omega)\right)$ is decreasing for $\epsilon\in(0,1)$. 
     Note that for any $\omega\in [-\Omega_{win},\Omega_{win}]$, $|\mathcal{E}_{mod}(\omega)|\le \sqrt{\frac{\pi}{\lambda}}$. To get a rough estimation of $\Omega_{win}$, one may combine Lemma \ref{lem: gauss-ineq} and the fact that $\Phi\left(-\sqrt{\lambda}\epsilon\Omega\right)>\Phi\left(\sqrt{\lambda}(-2\Omega+\epsilon\Omega)\right)$. By solving the equation $\frac{e^{-\lambda\epsilon^2\Omega^2}}{\sqrt{\lambda}\epsilon\Omega} = \frac{\sqrt{\pi}}{2\|\nu\|_{TV}}\sigma $, we get the value of $\epsilon$, an estimation of $\Omega_{win}$ can then be derived.
\end{rmk}
We point out that the Gaussian windowing causes the loss of high-frequency information, i.e. $\Omega_{win}<\Omega$. Moreover, the width of the Gaussian window and the effective cutoff frequency $\Omega_{win}$ are determined by choice of the Gaussian parameter, $\lambda$.
\begin{rmk}
One may consider other types of windowing functions, and all the analyses follow a similar framework to the one in this paper.
\end{rmk}

\subsection{SCAN-MUSIC}\label{subsec: scan-music}
Guided by the theoretical discussion in the previous subsection, we propose the SCAN-MUSIC in this subsection. In practice, we need to deal with the discrete version of (\ref{center-windowing}). We thus truncate the Gaussian window at some level $\gamma =\mathcal{O}(\sigma)>0$, and define
\begin{align}{\label{def:gamma}}
     \Gamma = \min\{ s\in\mathbb{N}: s\le K, e^{-\lambda(sh)^2}\le \gamma \}.
\end{align}
We also denote the centralized measurement data by
\begin{align}
    Y_{cen} = [\mathcal{S}_\mu Y (\omega_{-K}),\cdots,\mathcal{S}_\mu Y (\omega_{K})],
\end{align}
and the discrete truncated Gaussian window by
\begin{align}
    G_{\lambda,\Gamma} = h[G_\lambda(\omega_{-\Gamma}),\cdots,G_\lambda(\omega_{\Gamma})].
\end{align}
The numerical implementation of centralization and Gaussian windowing is given by
\begin{align}{\label{Y_win}}
    Y_{win} = Y_{cen} \ast G_{\lambda,\Gamma} [2\Gamma+1:2K+1],
\end{align}
where we only use the middle part of the discrete convolution to avoid the boundary effect of the convolution of two finite-length vectors.
More precisely, 
\begin{align}
Y_{win}[l] = \sum_{j=-\Gamma}^{\Gamma} Y_{cen}[\Gamma+l-j]\cdot G_{\lambda,\Gamma}[\Gamma+1+j], \quad l = 1,\cdots,2K-2\Gamma+1. 
\end{align}

The pseudo-code of centralization and Gaussian windowing is shown in \textbf{Algorithm \ref{algo_gauss}}. 
The equation (\ref{center-windowing}) implies that the processed measurement can be regarded as the signal generated by the centralized line spectra. According to the Nyquist-Shannon sampling criterion, we can enlarge the sampling step size to reconstruct the spectra therein. This fact motivates us to deploy a subsampling strategy, which can significantly reduce the computational cost. To implement the idea, we define a subsampling factor, $F_{sub}$, and enlarge the sampling step size to $F_{sub}\cdot h$. See Section \ref{subsec: detail1} for the choice of $F_{sub}$ and \textbf{Algorithm \ref{algo_sub1}} for the pseudo-code.\\

With the two techniques above, we are ready to introduce the algorithm for SCAN-MUSIC. 
First, for a given $\mu$, we apply centralization and Gaussian windowing for the chosen center $\mu$. Then the measurement can be regarded as the signal generated by the spectra within the essential region $[-R_{ess},R_{ess}] =: \{x\in \Real: e^{-\frac{x^2}{4\lambda}}\ge \kappa_E\}$, where $\kappa_E \sim \mathcal{O}(\sigma)$ is a parameter called the essential level. The essential region can be viewed as an approximation of the essential support of the Gaussian window.  
Second, we apply subsampling and MUSIC algorithm to reconstruct the spectra falling in the trust region $[-R_{tru}, R_{tru}]=:  \{x\in \Real: e^{-\frac{x^2}{4\lambda}} \ge \kappa_T\}$, where 
$\kappa_T$ is a parameter called the trust level which serves as an amplitude threshold for the centralized and windowed signal so that the spectral in $[-R_{tru}, R_{tru}]$ are less affected by the Gaussian windowing (recall the factor $e^{-\frac{(y_j-\mu)^2}{4\lambda}}$). In practice, we choose $\kappa_T$
to be close to $1$. 
Finally, the reconstruction is completed as $\mu$ sweep over the spectral domain.\\

The pseudo-code of the complete algorithm is displayed in \textbf{Algorithm \ref{algo_spectra}}. The implementation details can be found in Section \ref{subsec: detail1} and the complete error estimates for SCAN-MUSIC is shown in Section \ref{subsec: error analysis}.

\begin{algorithm}
	\caption{SCAN-MUSIC}
	\label{algo_spectra}
	\DontPrintSemicolon
    \SetKwInOut{Input}{Input}
    \SetKwInOut{Output}{Output}
    \SetKwInOut{Init}{Initialization}
	\Input{Original measurement: $Y$, noise level: $\sigma$, sample interval: $[-\Omega,\Omega]$, sampling step size: $h$, Sufficient large number: $H$} 
    \Input{Gaussian parameter: $\lambda$, Subsampling factor $F_{sub}$, spectral domain: $[R_1,R_2]$, Trust level: $L_{T}$, Essential level: $L_{E}$, Truncation level: $\gamma$}
        \tcc{determining trust region and essential support} 
		$grid \gets 0:h:H $ \;
		$\mathcal{F}G_{pre} \gets \exp(-\frac{1}{4\lambda}\cdot grid.^{\wedge} 2)$\;
		$\mathcal{F}G_{tru} \gets \mathcal{F}G_{pre}(\mathcal{F}G_{pre}>L_{T})$, $\mathcal{F}G_{ess} \gets \mathcal{F}G_{ess}(\mathcal{F}G_{ess}>L_{E})$\;
        $R_{tru} \gets [|\mathcal{F}G_{tru}|\cdot h]$,$R_{ess} \gets [|\mathcal{F}G_{ess}|\cdot h]$\;
        \tcc{reconstruction the line spectra in $[R_1,R_2]$} 
        $G_{center} \gets R_1 + R_{tru} : 2R_{tru} : R_2 - R_{tru}$\;
		\For{$m\ =\ 1:|G_{center}|$}{
            $\hat{y}_{pre} \gets [\ ]$\;
            $[Y_{win}] \gets CGM(Y,\Omega,h,\lambda,G_{center}(m),\gamma)$\;
            $[Y_{sub},h_{sub}]\gets Sub_1(Y_{win},\Omega_{\gamma},F_{sub},h)$\;
            $\hat{y}_{pre}$ is returned by MUSIC in the trust region $[-R_{tru},R_{tru}]$
            $\hat{y} \gets [\hat{y},\hat{y}_{pre}+G_{center}(m)]$\;
        }
	\Return $\hat{y}$\;
\end{algorithm}

\subsection{Sampling Complexity of SCAN-MUSIC}\label{subsec: sampling complexity}

From the mathematical model (\ref{measurement}), we notice that the sampling complexity is given by $2K+1$. The following theorem characterizes the relationship of sampling complexity required for SCAN-MUSIC and the source number in the regime $1/\rho\sim \mathcal{O}(1)$ and $\rho<1$. The general case is discussed in Section \ref{subsec: super-sparse regime}.
\begin{thm}\label{thm: random-complex}
    Assume that $supp(\nu) \subset [-R,R]$ and the sampling region is $[-\Omega,\Omega]$. In the regime $1/\rho\sim \mathcal{O}(1)$ and $\rho<1$, the sampling complexity $K$ required for SCAN-MUSIC and the source number satisfies
    \begin{align}
        K \ge \max\{n, \frac{n}{2\rho} \}.
    \end{align}
\end{thm}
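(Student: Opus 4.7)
The plan is to establish the two lower bounds $K \ge n/(2\rho)$ and $K \ge n$ independently, then take their maximum. Both arise from hypotheses already imposed in the ``Assumptions and notations'' subsection, so the argument should reduce to an elementary rearrangement together with an invocation of a standing assumption.

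For the first bound, I would start from the Nyquist-Shannon criterion $h \le \pi/R$ assumed throughout the paper. Since the sampling grid $\omega_k = (k/K)\Omega$ gives step size $h = \Omega/K$, this criterion is equivalent to $K \ge R\Omega/\pi$. Plugging in the definition $\rho = \frac{n}{2R}\cdot\frac{\pi}{\Omega}$, a direct rearrangement yields $\frac{R\Omega}{\pi} = \frac{n}{2\rho}$, so $K \ge n/(2\rho)$.

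For the second bound, $K \ge n$, I would simply invoke the standing hypothesis already stated in the paper's assumptions. This hypothesis reflects the identifiability/Prony-type requirement that the number of complex Fourier samples be at least comparable to the number of sources, which is a minimal precondition for the MUSIC subroutine to produce a nontrivial noise subspace once the scanning window is centered and the subsampled data are formed. Combining the two estimates gives $K \ge \max\{n, n/(2\rho)\}$, as claimed.

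No substantial technical obstacle is expected. The only point worth flagging is which of the two bounds is binding: when $\rho < 1/2$ the Nyquist-Shannon bound $n/(2\rho)$ dominates, whereas when $1/2 < \rho < 1$ the standing assumption $K \ge n$ is tight. The hypothesis $1/\rho \sim \mathcal{O}(1)$ together with $\rho < 1$ is what ensures the bound scales linearly in $n$ (with a constant depending on $\rho$) rather than separately in $R$, which is the qualitative conclusion the theorem is designed to convey.
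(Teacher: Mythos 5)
Your proposal is correct and follows essentially the same route as the paper: the bound $K \ge n/(2\rho)$ comes from the Nyquist--Shannon condition $\Omega/K \le \pi/R$ combined with the definition of $\rho$, exactly as in the paper's proof. The only cosmetic difference is in the bound $K \ge n$: you invoke the standing assumption from the notation section, whereas the paper rederives it from the identifiability count that $\nu$ has $2n$ unknowns to be matched against the $2K+1$ samples; either justification is acceptable here.
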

\begin{proof}
    First, notice that there are $2n$ unknowns to determine $\nu$, it is clear that $K \ge n$.\\
    For $supp(\nu) \subset [-R,R]$, by the Nyquist-Shannon criterion, we have $ \frac{\Omega}{K}\le \frac{\pi}{R}$. Combining the above argument and the definition of $\rho$, we have 
    \begin{align}{\label{complexity-random}}
        K \ge   R \cdot\frac{\Omega}{\pi} 
        = \frac{n}{2\rho}.
    \end{align}
    Therefore,  $K \ge \max\{n, \frac{n}{2\rho} \}$.
\end{proof}
The above theorem shows that, for $1/\rho\sim \mathcal{O}(1)$, the sampling complexity of SCAN-MUSIC can be $K \sim \mathcal{O}(n)$. We point out that the above argument can also be understood in the framework of finite rate of innovation (FRI) \cite{1003065}. Notice that the rate of innovation of the line spectral is $\frac{n}{R}$, and the sampling complexity is then also linear to the rate of innovation.

\subsection{Computational Complexity}{\label{subsec: random-compute-complexity}}
Using SCAN-MUSIC, the problem of wide-band random spectra reconstruction is divided into $\lceil \frac{R}{R_{tru}} \rceil$ mutually independent subproblems. In each subproblem, the complexity consists of the following three parts: the implementation of centralization and Gaussian windowing, the MUSIC algorithm, and other operations including subsampling. \\

In the first part, the complexity of the centralization step is of the order $\mathcal{O}(K)$, and the complexity of convolution is of the order $\mathcal{O}\left( K\log K \right)$. It is clear that the complexity of the third part is of the order $\mathcal{O}(K)$. The most time-consuming part is the MUSIC algorithm. First, the SVD of a square matrix of order $[\frac{|Y_{win}|}{2F_{sub}}]$ is needed for the line spectra reconstruction by the MUSIC algorithm. The complexity therein is of the order $\mathcal{O}\left( \frac{|Y_{win}|^3}{F_{sub}^3} \right)$. Denote the number of the test points per unit interval by $N$. The complexity of calculating the MUSIC functional is of the order $\mathcal{O}\left( R_{tru}N \cdot \frac{|Y_{win}|^2}{F_{sub}^2} \right)$. From the above estimations, we conclude that the total complexity of SCAN-MUSIC is of the order
\begin{align}{\label{random-complex}}
    \mathcal{O}\left(\frac{R}{R_{tru}} \cdot\left(\frac{|Y_{win}|^3}{F_{sub}^3}+R_{tru}N \cdot \frac{|Y_{win}|^2}{F_{sub}^2}+K\log K \right)  \right).
\end{align}
On the other hand, using the prior information on the line spectral density, we can choose an appropriate $F_{sub}$ to make $\frac{|Y_{win}|}{F_{sub}}\sim R_{ess} \rho$ in practice (as shown in Section \ref{subsec: detail1}). Combining this observation and the definition of $\rho$, we can rewrite the result in (\ref{random-complex}) as follows
\begin{align}
    \mathcal{O}\left( n\cdot\left( (R_{ess} \rho)^3 + (R_{ess} \rho)^2R_{tru}N + n\log n \right)  \right).
\end{align}
Since $R_{ess} \rho$, $R_{tru}$, $N$ are fixed numbers, we conclude that the result in (\ref{random-complex}) can be rewritten as
\begin{align}{\label{re-random-complex}}
    \mathcal{O}\left( n^2\log n \right).
\end{align}

By a similar argument, we can calculate the complexity of the standard MUSIC algorithm of the order
\begin{align}{\label{music-random-complex}}
    \mathcal{O}\left(K^3+ RN \cdot K^2\right)\sim \mathcal{O}\left(n^3\right).
\end{align}
Comparing the results in (\ref{re-random-complex}) and (\ref{music-random-complex}), we see SCAN-MUSIC significantly reduces the computational cost of the standard MUSIC algorithm.

\subsection{Scaling Argument and Super-sparse Regime}\label{subsec: super-sparse regime}
In this section, we adopt a scaling argument to complete the discussion of the sampling complexity of SCAN-MUSIC. We denote the scaling parameter as $\tau$ with $0<\tau \le 1$. Observe that 
\begin{align}
    f(\omega) = \sum_{j=1}^n a_j e^{i y_j \omega} = \sum_{j=1}^n a_j e^{i \frac{y_j}{\tau}\cdot \tau \omega}.
\end{align}
For any $\rho<1$, we pick $0<\tau\le 1$ such that $\tau/\rho \sim \mathcal{O}(1)$. We adopt the downsampling strategy \cite{liu2022measurement}, which uses the data in the interval $[-\tau\Omega,\tau\Omega]$ to reconstruct the line spectra. We denote the minimum number of samples as $\tilde{K}$. It is clear that $\tilde{K}=2\tau K+1$. The dependence on the source number, $n$, is characterized in the following theorem. 
\begin{thm}\label{thm: random-complex-super}
    Assume $supp(\nu) \subset [-R,R]$ and sampling region $[-\tau\Omega,\tau\Omega]$. In the regime that $\rho<1$, we have 
    \begin{align}
        \tilde{K} \ge \max\{2n, \frac{\tau}{\rho}\cdot n +1\}.
    \end{align}
\end{thm}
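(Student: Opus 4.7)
My approach is to reduce the claim to the already-proved Theorem~\ref{thm: random-complex} by exploiting the scaling identity displayed immediately before the statement, so that the downsampled problem becomes a standard full-band instance to which the previous sampling-complexity bound applies directly.

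First, I would recast the downsampled measurement as an equivalent full-band LSE instance. Under the substitution $\omega' = \omega/\tau$, the original samples $\omega_k = k\Omega/K$ with $|k| \le \tau K$ map bijectively onto $\omega'_k = k\Omega/(\tau K) \in [-\Omega,\Omega]$, and the measured signal becomes
\begin{equation*}
\tilde f(\omega') \;=\; \sum_{j=1}^n a_j\, e^{i\tilde y_j \omega'}, \qquad \tilde y_j := \tau y_j \in [-\tau R,\tau R].
\end{equation*}
Thus the downsampled data is equivalent to a full-band LSE problem for a new discrete measure supported in $[-\tau R,\tau R]$, sampled with half-range index $K' := \tau K$ and total sample count $2K' + 1 = 2\tau K + 1 = \tilde K$.

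Second, I would compute the effective density of the rescaled instance: $\rho' = n/(2\tau R)\cdot \pi/\Omega = \rho/\tau$. The hypothesis $\tau/\rho \sim \mathcal{O}(1)$ translates to $1/\rho' \sim \mathcal{O}(1)$; together with $\rho < 1$ and $\tau \le 1$, one may choose $\tau \in [\rho,1]$ so that $\rho' \le 1$, placing the rescaled problem in the exact regime where Theorem~\ref{thm: random-complex} applies. Invoking that theorem on the rescaled instance yields
\begin{equation*}
K' \;\ge\; \max\bigl\{n,\; n/(2\rho')\bigr\} \;=\; \max\bigl\{n,\; n\tau/(2\rho)\bigr\}.
\end{equation*}
Multiplying by $2$ and adding $1$ gives $\tilde K = 2K' + 1 \ge \max\{2n+1,\; (\tau/\rho)n + 1\} \ge \max\{2n,\; (\tau/\rho)n + 1\}$, which is the desired bound.

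The argument is essentially bookkeeping, so I do not foresee a serious obstacle; the only step requiring mild care is checking that $\tau$ can be selected in $(0,1]$ so that both $\rho' \le 1$ and $1/\rho' \sim \mathcal{O}(1)$ hold simultaneously, which is automatic whenever $0 < \rho < 1$ (e.g.\ by taking $\tau = \min(c\rho,1)$ for a fixed $c \ge 1$). A minor technical point is that $\tau K$ need not be an integer, but replacing it by $\lfloor \tau K \rfloor$ changes $\tilde K$ by at most a constant and does not affect the stated inequality.
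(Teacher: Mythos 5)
Your proof is correct and is essentially the paper's own argument: the paper derives the bound directly from the same two facts you use --- the $2n$-unknowns count and the Nyquist--Shannon criterion applied to the sampling interval $[-\tau\Omega,\tau\Omega]$ --- whereas you obtain them by rescaling to a full-band instance and citing Theorem~\ref{thm: random-complex}, whose proof consists of exactly those two facts. The repackaging is sound (and your attention to the integrality of $\tau K$ and to whether the rescaled density $\rho'=\rho/\tau$ satisfies the regime hypothesis of Theorem~\ref{thm: random-complex} is appropriate, though inessential since that hypothesis is never used in its proof), so this is the same underlying argument in a slightly different wrapper.
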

\begin{proof}
    First, notice that there are $2n$ unknowns to determine $\nu$, it is clear that $\tilde{K} \ge 2n$.\\
    For $supp(\nu) \subset [-R,R]$, by the Nyquist-Shannon criterion, we have $ \frac{2\tau\Omega}{\tilde{K}-1}\le \frac{\pi}{R}$. Combining the above argument and the definition of $\rho$, we have 
    \begin{align}{\label{complexity-random-super}}
        \tilde{K} \ge  2\tau R \cdot\frac{\Omega}{\pi} +1 
        = \frac{\tau}{\rho}\cdot n +1.
    \end{align}
    Therefore, $\tilde{K} \ge \max\{2n, \frac{\tau}{\rho}\cdot n +1\}$.
\end{proof}
The above theorem shows that, for any $\rho<1$, by an appropriate guess of $\tau$ that $\tau/\rho \sim \mathcal{O}(1)$, $\tilde{K} \sim \mathcal{O}(n)$ samples are sufficient to reconstruct the line spectra. The above theorem can also be regarded as the generalization of Theorem \ref{thm: random-complex}. 
Notice that, for the downsampled data, the Rayleigh limit becomes $\frac{\pi}{\tau \Omega}$. It suggests that if the minimum separation distance, $d_{min}$, is known as prior information, we can choose the scaling parameter based on $\tau \sim \frac{\pi}{d_{min}\Omega}$. The sampling complexity can therefore be reduced.
\begin{rmk}
    The above result can also be understood in the framework of finite rate of innovation (FRI). Notice that the rate of innovation of the line spectral is $2\rho$, and the sampling complexity is then linear to the rate of innovation.
\end{rmk}

Meanwhile, we conclude that the computational complexity after introducing the scaling parameter is $\mathcal{O}(n^2\log n)$ by a similar argument as in the previous section.\\

We point out that the super-sparse regime i.e. $\rho \ll 1$ is a special case. Therefore, SCAN-MUSIC can also be applied to the Sparse Fourier Transform problem as a deterministic algorithm. Without loss of generality, we assume $\tau = 1$ unless otherwise mentioned in the rest of the paper.

\subsection{Trade-off of computational cost and resolution}\label{subsec: tradeoff}
Before discussing the trade-off of computational cost and loss of resolution of SCAN-MUSIC, we first review the computational resolution limit of the LSE problem.\\

In the recent work \cite{9410626}, the authors showed that the computational limit on support reconstruction of the LSE problem, which is defined as the minimum separation distance between the spectra that ensures a stable reconstruction of the spectral supports, depends on the cutoff frequency, the number of line spectral, and the signal-to-noise ratio (SNR) which is defined to be $SNR:= \frac{m_{min}}{\sigma}$ in the following way:
\begin{align}
    \mathcal{D}_{supp} \sim \mathcal{O}\left(\frac{\pi}{\Omega}\left(\frac{1}{SNR}\right)^{\frac{1}{2n-1}}\right).
\end{align}
The above result can be interpreted as follows. If $n$ is small, then super-resolution is achievable with sufficient SNR. When $n$ is large, however, hoping to achieve super-resolution through extremely high SNR is not practical. For moderate SNR, $\left(\frac{1}{SNR}\right)^{\frac{1}{2n-1}} \sim \mathcal{O}(1)$, when $n \gg 1$. In this case, therefore, the minimum separation distance for stable reconstruction should be of the order $\mathcal{O}\left( \frac{\pi}{\Omega}\right)$. We illustrate this phenomenon in the following example. We set $\Omega=1$ and uniformly align $10$ line spectra with a separation distance of $\pi$. Under the noise level of $\mathcal{O}(10^{-2})$, the resulting MUSIC plot, as depicted in Figure \ref{Fig: music plot}, exhibits only $8$ peaks. The situation does not improve much even if we reduce the noise level to $\mathcal{O}(10^{-5})$. This observation indicates that the original MUSIC algorithm fails to achieve successful reconstruction under this setup. Thus, the LSE problem in the regime where $\rho=1$ proves to be challenging when the spectra number is large. On the other hand, when the line spectra have cluster structures as described in section \ref{subsec: SCAN-MUSIC for cluster}, the ``global'' problem can be decoupled into ``local'' problems of reconstructing line spectra in each cluster. Then the theory of computational resolution limit can be applied to each ``local'' problem with $n$ being the source number in the involved cluster. Therefore it is still possible to reconstruct line spectra therein with separation distance below the Rayleigh limit when $n$ is not large. This is indeed the case by using the SCAN-MUSIC(C) method developed in the next section. See the numerical results in Section \ref{sec: numerical study}. \\

\begin{figure}[ht]
	\centering
    \subfloat[$SNR\sim\mathcal{O}(10^{-2})$]{
	\includegraphics[width=0.4\textwidth]{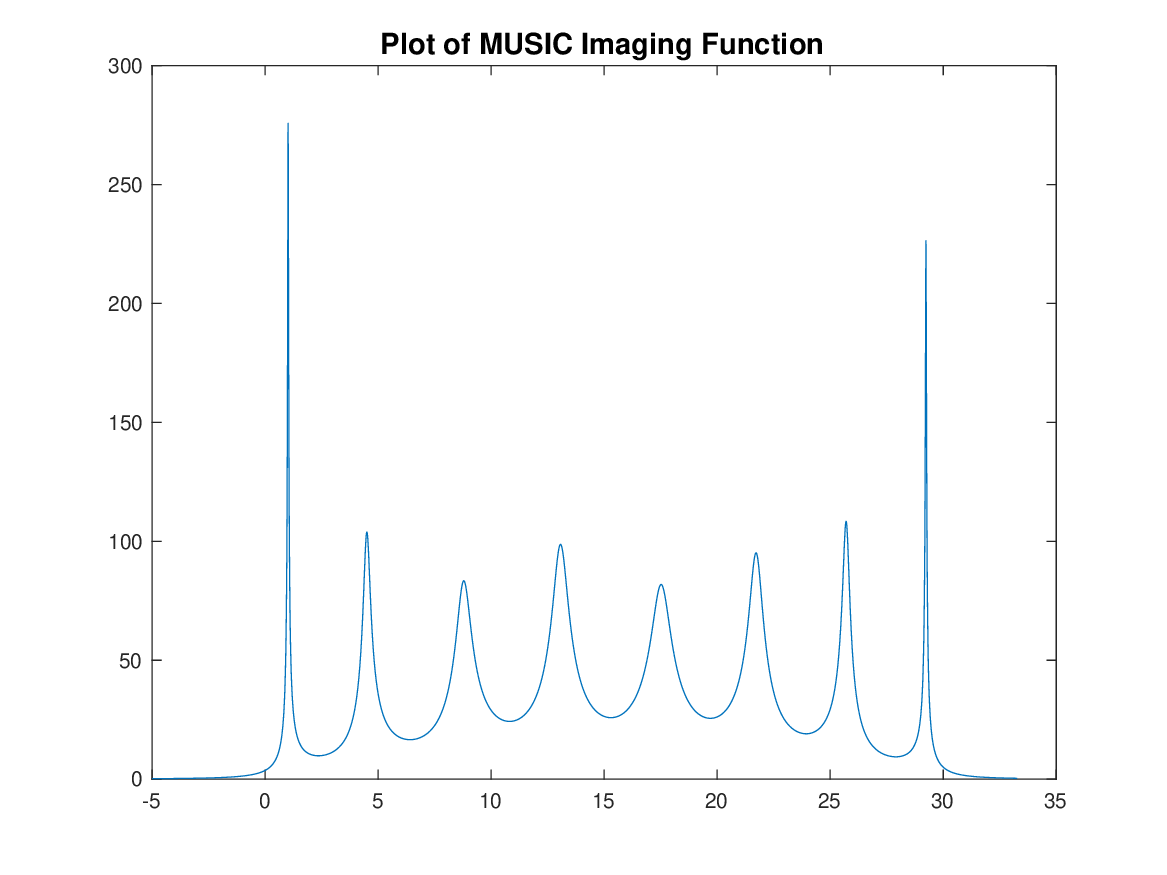}}
    \subfloat[$SNR\sim\mathcal{O}(10^{-5})$]{
	\includegraphics[width=0.4\textwidth]{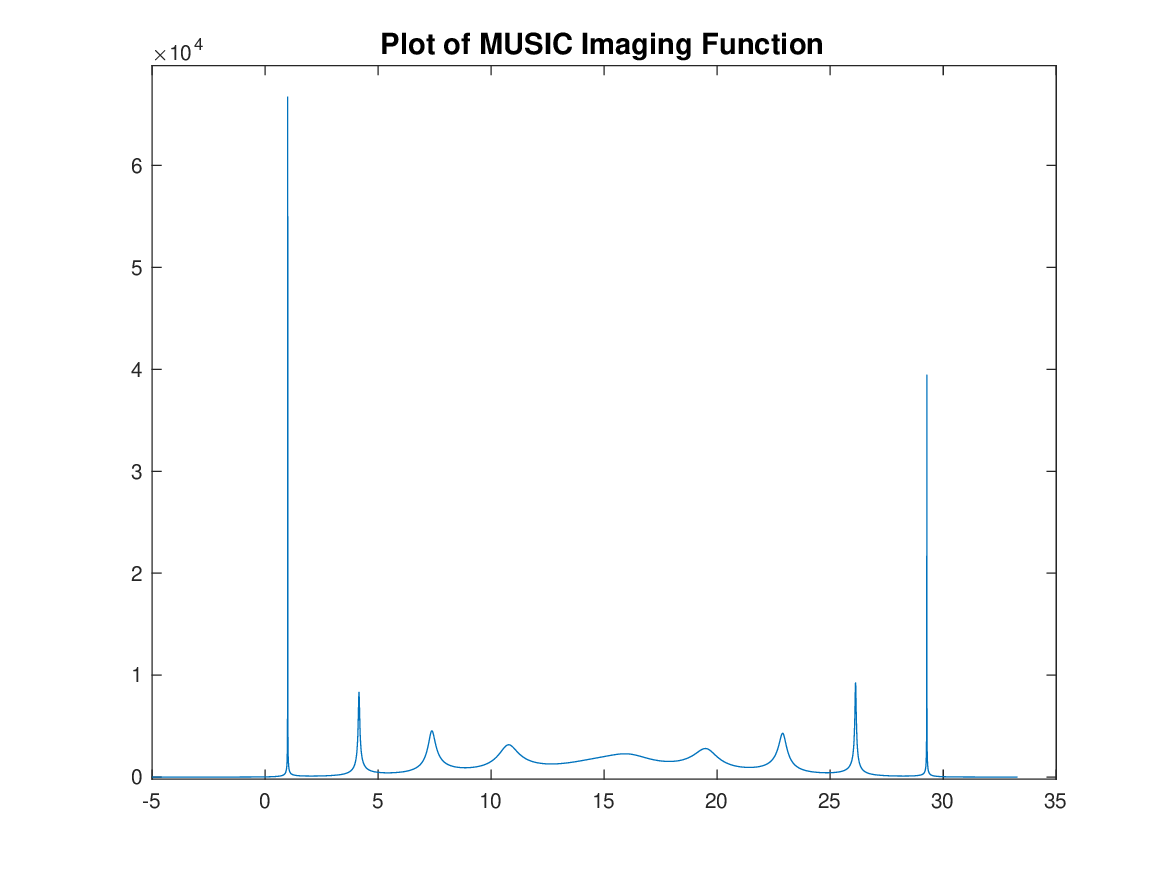}}
 \caption{MUSIC plot for 10 line spectra with uniform separation distance $\pi$ under different SNR.}
\label{Fig: music plot}
\end{figure}

For SCAN-MUSIC, we notice that under moderate noise, for any chosen center $\mu_0$, the number of line spectra within the essential region, $[-R_{ess},R_{ess}]$, is not small. Therefore, it is likely to have $\left(\frac{1}{SNR'}\right)^{\frac{1}{2n-1}} \sim \mathcal{O}(1)$, where we denote the $SNR'$ as the effective signal-to-noise ratio containing the original SNR and the computational error $\mathcal{E}_{total}$. Considering the resolution loss and downsampling discussed in Section \ref{subsec: scan-music}, the minimum separation distance for stable reconstruction is of the order $\mathcal{O}\left( \frac{\pi}{\Omega_{win}}\right)$.\\

We demonstrate the trade-off between computational complexity and loss of resolution by considering two settings. First, we consider the case $\lambda = +\infty$. Then SCAN-MUSIC coincides with the standard MUSIC method, which is time-consuming but with no loss of resolution in the reconstruction.  Second, we pick $\lambda$ to be sufficiently small such that the essential region has a length of only a few Rayleigh limits. Then, from the discussion in Section \ref{subsec: windowing and centralization}, $\Omega_{win}$ is significantly smaller than $\Omega$, which implies a loss of resolution. In this case, however, the number of line spectra within the essential support is small, and the subsampling strategy can therefore significantly reduce the computational cost. \\

We note that for a fixed choice of $\lambda$, a smaller choice of $\kappa_T$ includes more line spectral with less contrast to noise which lowers the stability of the algorithm. On the other hand, such a choice leads to fewer subproblems to be solved which increases the efficiency of the algorithm.

\section{Clustered spectra Reconstruction}\label{sec: cluster spectra}
In this section, we consider the reconstruction of the line spectra with cluster structure. In Section \ref{subsec: SCAN-MUSIC for cluster}, we introduce the model for line spectra with cluster structure. We then refine SCAN-MUSIC using the annihilating filter technique in Section \ref{subsec: cluster algorithm}. The sampling and computational complexity of the proposed method are shown in Section \ref{subsec: cluster complexity}. Finally, the reconstruction of cluster centers is discussed in Section \ref{subsec: cluster center}.\\

\subsection{SCAN-MUSIC Meets Clustered line spectra}\label{subsec: SCAN-MUSIC for cluster}
First, we introduce the mathematical model for line spectra with cluster structure. Let $\mathcal{I}_t \subset \Real$ be a closed interval centered at $\Bar{y}_t$ with half-length $D_t$, for $t = 1,\cdots, T$. We say $\cup_{t=1}^T \mathcal{I}_t$ is $(T,D,L)$-region if $\{\mathcal{I}_t\}$ are pairwise disjoint,
\begin{align}
    \max_{t=1,\cdots, T} D_j \le D,\quad \min_{1\le s<t\le T} |\Bar{y}_s-\Bar{y}_t|\ge L+2D.
\end{align}
We call the line spectra represented by $\nu$ is $(T,D,L)$-clustered if for some $(T,D,L)$-region, $\cup_{t=1}^T \mathcal{I}_t$, $supp(\nu)\subset \bigcup_{t=1}^T \mathcal{I}_t$ and $supp(\nu)\cap \mathcal{I}_t \ne \varnothing,\ \forall t = 1,\cdots, T$. We denote $\nu_t = \nu\cdot\chi_{\mathcal{I}_t}$ to be the $t$-th cluster and suppose there are $n_t$ line spectra therein with position $\{y_{t,s}\}$ and corresponding amplitude $\{a_{t,s}\}$. Assume that $n_t \le N_0$, for any $t = 1,\cdots,T$. We define the local measurement, $f_t$ by
\begin{align}
    f_t(\omega) = \sum_{s=1}^{n_t} a_{t,s}e^{i y_{t,s}\omega},\quad \omega\in[-\Omega,\Omega].
\end{align}
Then, the mathematical model in (\ref{measurement}) can be written into 
\begin{align}{\label{cluster measurement}}
    Y(\omega_k) = \sum_{t=1}^T f_t(\omega_k)+W(\omega_k) = \sum_{t=1}^T \sum_{s=1}^{n_t} a_{t,s}e^{iy_{t,s}\omega_k}+W(\omega_k),\ \omega_k\in[-\Omega,\Omega].
\end{align}
We start by considering the clustered structure with a single cluster of line spectra. In this case, we let $\mu$ be the cluster center and we conclude the computational limit of SCAN-MUSIC is of the order
\begin{align}
    \mathcal{O}\left(\frac{\pi}{\Omega_{win}}\left(\frac{1}{SNR}\right)^{\frac{1}{2n-1}}\right).
\end{align}

For the multiple cluster case, we expect that when the clusters are well-separated, the interference between different clusters will be small. SCAN-MUSIC should be able to achieve stable reconstruction. To confirm this idea, we design the following numerical experiment.\\

We set $\Omega = 1$, $\sigma = 10^{-3}$, $h = 0.001$. Then, the corresponding Rayleigh limit is $\pi$. We set up 50 clusters with the distance between each cluster being $6 \pi$, and there are two line spectra in each cluster with separation distance $1$. According to the distribution of the line spectra over the spectral domain, we pick the Gaussian parameter, the trust level, and the subsampling factor as $(\lambda, \kappa_T, F_{sub}) = (100,0.95,50)$. The absolute error and running time are plotted in Figure \ref{Fig: random-cluster}.\\
\begin{figure}[ht]
	\centering
    \subfloat[Running time]{
	\includegraphics[width=0.4\textwidth]{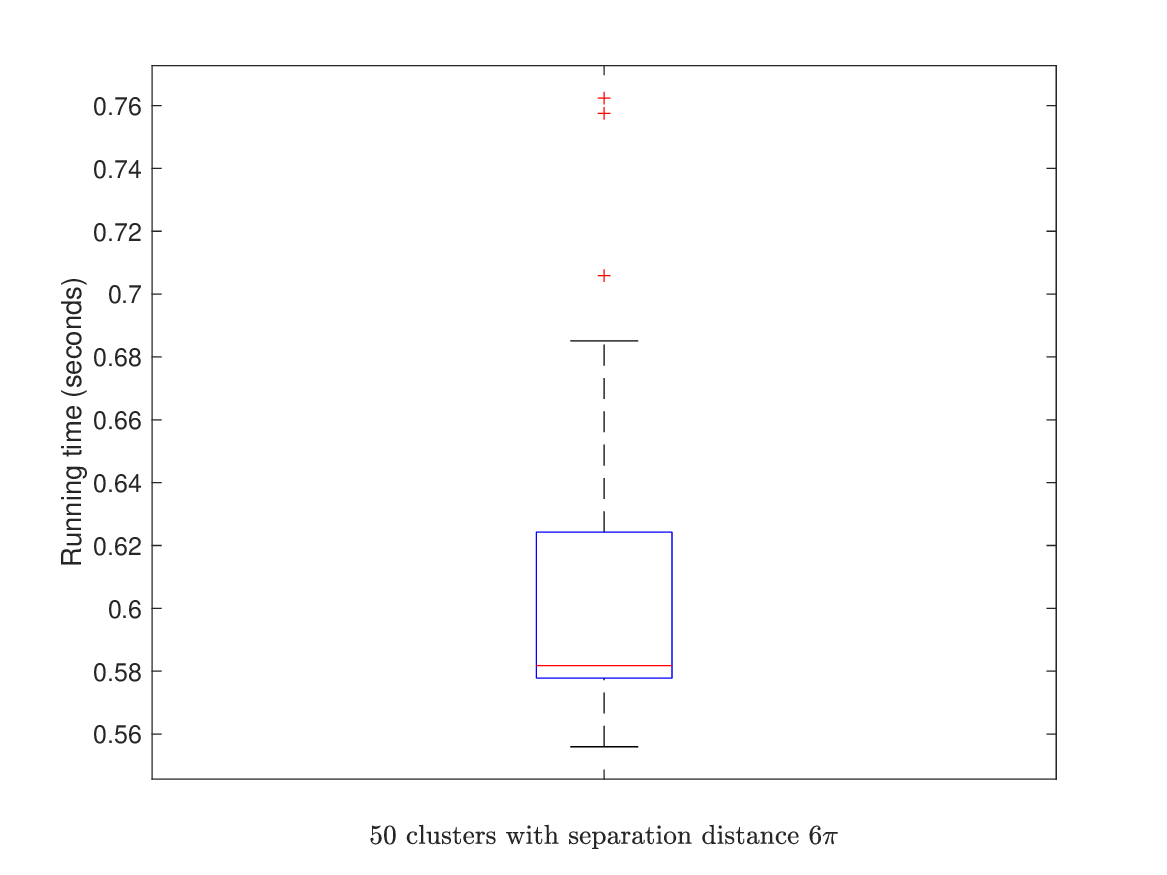}}
    \subfloat[Reconstruction error]{
	\includegraphics[width=0.4\textwidth]{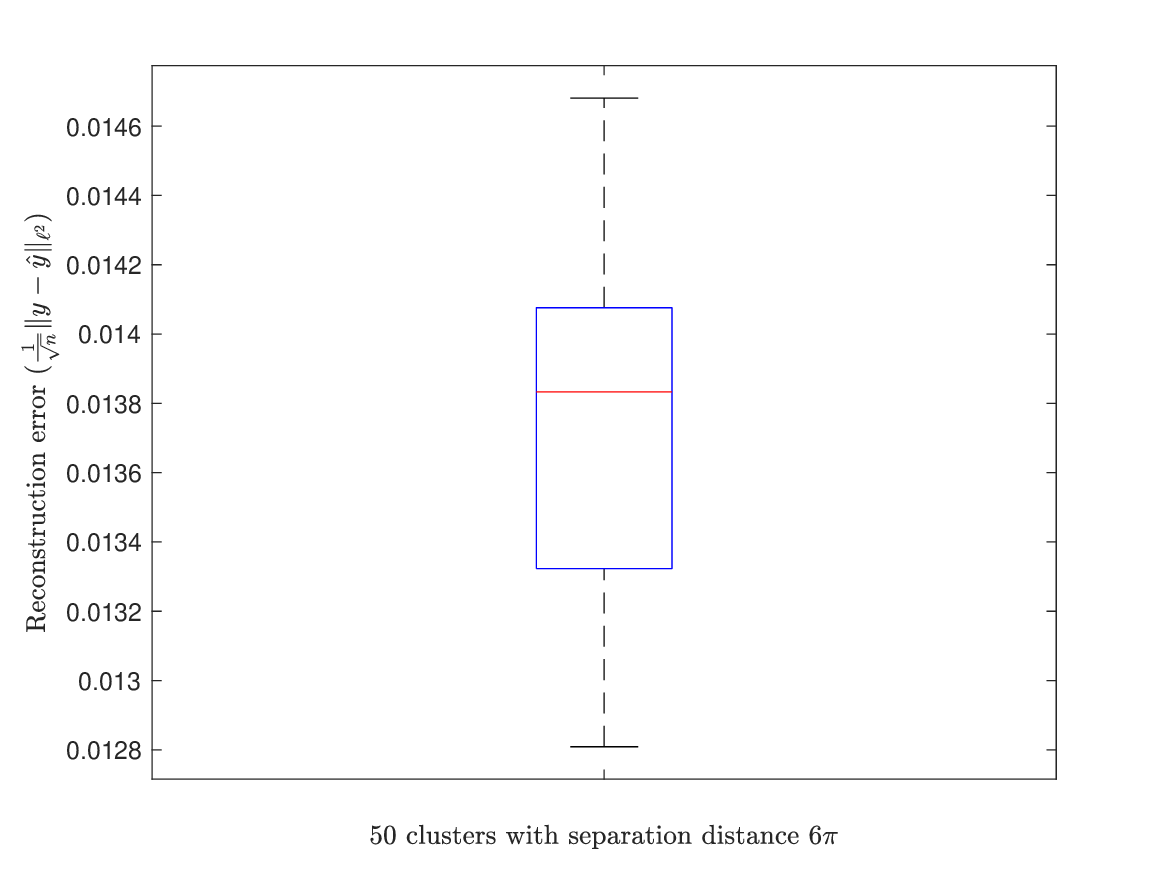}}
 \caption{Numerical experiment result of SCAN-MUSIC method for clustered line spectra.}
\label{Fig: random-cluster}
\end{figure}

We observe that in the setup above, SCAN-MUSIC shows the super-resolution ability. We point out that when the separation distance for clusters is small, the interference between different clusters becomes large, which leads to the failure of stable reconstruction. In the next section, the tools of multipole expansion and annihilating filters are introduced to deal with this issue.

\subsection{Annihilating Filter Based Cluster Removal and Localization}\label{subsec: cluster algorithm} 

In this section, we first introduce the multipole expansion and annihilating filter. Refining SCAN-MUSIC by the annihilating filter technique, we propose the SCAN-MUSIC for clustered spectra, called SCAN-MUSIC(C).\\

The annihilating filter is proposed in the framework of finite rate of innovation theory \cite{1003065}. Literature usually achieves spectra reconstruction with the help of annihilating filters, see e.g. \cite{7547372}\cite{doi:10.1137/15M1042280}. In recent work \cite{fei2023iff}, the authors achieve source removal using the annihilating filter technique. 
We start with the following multipole expansion \cite{9410626}\cite{LIU2022402} of $f_t$,
\begin{align}
    f_t(\omega) 
    &= \sum_{s=1}^{n_t} a_{t,s}e^{i y_{t,s}\omega} 
    = \sum_{s=1}^{n_t} a_{t,s}e^{i\Bar{y}_t\omega}e^{i (y_{t,s}-\Bar{y}_t)\omega} \notag\\
    &=\sum_{s=1}^{n_t} a_{t,s}e^{i\Bar{y}_t\omega}\sum_{r=0}^{\infty} \frac{(y_{t,s}-\Bar{y}_t)^r}{r!} (i\omega)^r\notag\\
    &=\sum_{r=0}^{\infty}\sum_{s=1}^{n_t} a_{t,s} \frac{(y_{t,s}-\Bar{y}_t)^r}{r!} \cdot (i\omega)^re^{i\Bar{y}_t\omega}
\end{align}
Note that 
\begin{align}
    \mathcal{F}\left((i\omega)^re^{i\Bar{y}_t\omega}\right) = \delta_{\Bar{y}_t} ^{(r)},
\end{align}
and recall that $h = \frac{\Omega}{K}$ is the sampling step size. Let $A_{t,M_t}$ be the $M_t$-th order annihilating polynomial defined by 
\begin{align}\label{filter spatial domain}
    A_{t,M_t}(x) = \left(e^{ihx}-e^{ih\Bar{y}_t}\right)^{M_t} = \sum_{l = 0}^{M_t} c_{t,l} e^{ilhx},
\end{align}
where $M_t$ is a positive integer with $M_t>r$. Then, straightforward calculation shows that
\begin{align}
    \sum_{l = 0}^{M_t} c_{t,l} \cdot (i\omega_{k-l})^re^{i\Bar{y}_t\omega_{k-l}} = 0, \quad \forall k = -K+M_t,\cdots,K.
\end{align}
The above identity can be interpreted in the spatial domain as
\begin{align}
    \delta_{\Bar{y}_t} ^{(r)} \cdot  A_{t,M_t}(x) = 0,
\end{align}
in the sense of distribution. The $M_t$-th order annihilating filter applied at cluster center $\Bar{y}_s$ can be derived based on (\ref{filter spatial domain}) as
\begin{align}\label{filter frequency domain}
    Q_t = \underbrace{[1,-e^{i\Bar{y}_t h}]\ast \cdots \ast [1,-e^{i\Bar{y}_t h}]}_{M_t\  times} = [c_{t,0},c_{t,1},\cdots,c_{t,M_t}].
\end{align}
Recall that $D_t$ denotes the half-length of the interval $\mathcal{I}_t$. The following proposition in the spacial domain characterizes decaying property for the profile of $t$-th cluster after applying the annihilating filter.
\begin{prop}{\label{choice of poles}}
For the local measurement $f_t$, we have
\begin{align}{\label{filtered image estimation}}
    \left| \left(\mathcal{F} f_t \cdot A_{t,M_t}\right) \ast \left(\mathcal{F}\chi_{[-\Omega,\Omega]}\right)(x)\right|
    \le \frac{\|\nu_t\|_{TV}}{\pi}\cdot (hD_t)^{M_t} e^{hD_t}\cdot\frac{1}{|x-\Bar{y}_t|}.
\end{align}
\end{prop}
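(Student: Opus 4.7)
The plan is to evaluate $\mathcal{F}f_t\cdot A_{t,M_t}$ as a point measure, convolve with $p:=\mathcal{F}\chi_{[-\Omega,\Omega]}=\sin(\Omega x)/(\pi x)$, and then estimate the resulting finite sum pointwise. Since $\mathcal{F}f_t=\sum_{s=1}^{n_t}a_{t,s}\delta_{y_{t,s}}$, multiplication by the smooth function $A_{t,M_t}$ collapses to the weighted Dirac combination $\sum_s a_{t,s}A_{t,M_t}(y_{t,s})\,\delta_{y_{t,s}}$, so that
\begin{equation*}
\bigl(\mathcal{F}f_t\cdot A_{t,M_t}\bigr)\ast p(x)=\sum_{s=1}^{n_t}a_{t,s}\,A_{t,M_t}(y_{t,s})\,p(x-y_{t,s}),
\end{equation*}
and the problem reduces to a pointwise estimate of this finite sum.

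Next I would bound the coefficient and the kernel separately. Using the factored expression $A_{t,M_t}(y)=(e^{ihy}-e^{ih\bar{y}_t})^{M_t}$ and pulling out $e^{ih\bar{y}_tM_t}$, the modulus becomes $|e^{ih(y_{t,s}-\bar{y}_t)}-1|^{M_t}$; since $y_{t,s}\in\mathcal{I}_t$ gives $|y_{t,s}-\bar{y}_t|\le D_t$, the elementary bound $|e^{i\theta}-1|\le|\theta|$ yields $|A_{t,M_t}(y_{t,s})|\le(hD_t)^{M_t}$. Combined with the standard decay $|p(x-y_{t,s})|\le1/(\pi|x-y_{t,s}|)$ and the amplitude sum $\sum_s|a_{t,s}|=\|\nu_t\|_{TV}$, this gives the skeleton estimate
\begin{equation*}
\bigl|(\mathcal{F}f_t\cdot A_{t,M_t})\ast p(x)\bigr|\le\frac{\|\nu_t\|_{TV}}{\pi}\,(hD_t)^{M_t}\,\max_{s}\frac{1}{|x-y_{t,s}|}.
\end{equation*}
It then remains to convert $1/|x-y_{t,s}|$ into $1/|x-\bar{y}_t|$ via the triangle inequality $|x-y_{t,s}|\ge|x-\bar{y}_t|-D_t$, which gives
\begin{equation*}
\frac{1}{|x-y_{t,s}|}\le\frac{1}{|x-\bar{y}_t|}\cdot\frac{1}{1-D_t/|x-\bar{y}_t|};
\end{equation*}
the elementary inequality $1/(1-z)\le e^{hD_t}$ valid for $z\le 1-e^{-hD_t}$, i.e.\ in the operative regime where $|x-\bar{y}_t|\gtrsim D_t/(1-e^{-hD_t})$, absorbs the auxiliary $e^{hD_t}$ factor and yields the stated right-hand side.

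The main technical obstacle is distributing the constants so that the final bound has precisely the form $\|\nu_t\|_{TV}(hD_t)^{M_t}e^{hD_t}/(\pi|x-\bar{y}_t|)$. The amplitude bound $(hD_t)^{M_t}$ coming from the $M_t$-fold annihilator at sources lying $D_t$-close to $\bar{y}_t$ is tight, and the Dirichlet-kernel decay is standard, but the auxiliary $e^{hD_t}$ factor emerges only in the last triangle-inequality conversion; identifying the correct regime in which this conversion produces a clean multiplicative constant, rather than a more complicated prefactor depending on $|x-\bar{y}_t|$, is the delicate point.
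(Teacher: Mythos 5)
Your setup is sound and is in fact dual to the paper's: writing $\mathcal{F}f_t=\sum_s a_{t,s}\delta_{y_{t,s}}$, collapsing the product with the smooth filter to $\sum_s a_{t,s}A_{t,M_t}(y_{t,s})\,\delta_{y_{t,s}}$, and the coefficient bound $|A_{t,M_t}(y_{t,s})|=|e^{ih(y_{t,s}-\bar{y}_t)}-1|^{M_t}\le (hD_t)^{M_t}$ are all correct (your coefficient is even slightly sharper than the paper's). The genuine gap is the final recentering step. To convert $1/|x-y_{t,s}|$ into $e^{hD_t}/|x-\bar{y}_t|$ you need $|x-\bar{y}_t|\le e^{hD_t}|x-y_{t,s}|$, which via $|x-\bar{y}_t|\le |x-y_{t,s}|+D_t$ forces $|x-y_{t,s}|\ge D_t/(e^{hD_t}-1)\approx 1/h$. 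Since $h\le \pi/R$ by the Nyquist--Shannon assumption, your ``operative regime'' is $|x-\bar{y}_t|\gtrsim R/\pi$, i.e.\ essentially nowhere in the spectral domain --- and in particular not at the neighboring clusters (distance $L+2D$, a few Rayleigh limits), which is exactly where the proposition is used to control inter-cluster interference. The factor $e^{hD_t}$ is far too small to absorb this penalty, and it is not where the paper's $e^{hD_t}$ comes from.

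The paper avoids recentering altogether: it expands $\mathcal{F}f_t$ about the cluster center as the multipole series $\sum_{r\ge 0}\bigl(\sum_s a_{t,s}(y_{t,s}-\bar{y}_t)^r/r!\bigr)\delta_{\bar{y}_t}^{(r)}$, uses that $\delta_{\bar{y}_t}^{(r)}\cdot A_{t,M_t}=0$ for $r<M_t$ while the surviving terms reduce to multiples of $\delta_{\bar{y}_t}$ itself, so the convolution is exactly a multiple of $p(x-\bar{y}_t)$ and the bound $|p(x-\bar{y}_t)|\le 1/(\pi|x-\bar{y}_t|)$ applies directly; the factor $e^{hD_t}$ then arises from summing $\sum_{k\ge 0}(hD_t)^k/k!$ over the surviving orders, not from any triangle inequality. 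To repair your argument you would either have to adopt that expansion, or else keep your (correct and usable) intermediate bound $\frac{\|\nu_t\|_{TV}}{\pi}(hD_t)^{M_t}\max_s 1/|x-y_{t,s}|$ and restate the proposition with the singularity at the individual spikes rather than at $\bar{y}_t$.
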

\begin{proof}
    First, we notice that for any given $M_t,r\ge 0$, we have
\begin{align}
    \delta_{\Bar{y}_t}^{(r)}\cdot A_{t,M_t}(x) = \delta_{\Bar{y}_t}^{(r)}\cdot \left( e^{ihx}-e^{ih\Bar{y}_t}\right)^{M_t}=
    \begin{cases}
        0, & r<M_t,\\
        (-ih)^r M_t! e^{ih\Bar{y}_t}\delta_{\Bar{y}_t}, & r\ge M_t.
    \end{cases}
\end{align}
Then, we have
\begin{align}
    \mathcal{F}f_t \cdot A_{t,M_t} 
    &= \left( \sum_{r=0}^{\infty} \sum_{s=1}^{n_t} a_{t,s}\frac{(y_{t,s}-\Bar{y}_t)^r}{r!}\cdot \delta_{\Bar{y}_t}^{(r)}\right)\cdot A_{t,M_t} \notag\\
    &= \sum_{r=M_t}^{\infty} \sum_{s=1}^{n_t} a_{t,s}\frac{(y_{t,s}-\Bar{y}_t)^r}{(r-M_t)!}\cdot (-ih)^re^{ih\Bar{y}_t}\delta_{\Bar{y}_t}.
\end{align}
Therefore,
\begin{align}
    |\left(\mathcal{F} f_t \cdot A_{t,M_t}\right) \ast \mathcal{F}\chi_{[-\Omega,\Omega]}(x)|
    &= |\sum_{r=M_t}^{\infty} \sum_{s=1}^{n_t} a_{t,s}\frac{(y_{t,s}-\Bar{y}_t)^r}{(r-M_t)!}\cdot (-ih)^re^{ih\Bar{y}_t}p(x-\Bar{y}_t)|   \notag\\
    &\le \sum_{r=M_t}^{\infty} \sum_{s=1}^{n_t} |a_{t,s}|\cdot \frac{(hD_t)^r}{(r-m)!}\cdot|p(x-\Bar{y}_t)| \notag\\
    &\le \frac{\|\nu_t\|_{TV}}{\pi}\cdot (hD_t)^{M_t} e^{hD_t}\cdot\frac{1}{|x-\Bar{y}_t|}
\end{align}
\end{proof}

The above proposition can be interpreted as follows. Since the sampling step size $h\leq \frac{\pi}{R}$ according to the Nyquist-Shannon criterion, $hD_t\ll 1$. Hence, the annihilating filter indeed filters a major part of the signal generated by the $t$-th cluster and can therefore reduce the interference to the other clusters. We observe that according to the result in (\ref{filtered image estimation}), larger $M_t$ can result in better annihilating effects. On the other hand, the application of an annihilating filter has an effect on the signal-to-noise ratio, which can be seen in the following calculation. Suppose that an $M_t$-th order annihilating filter is applied at $\Bar{y}_t$. Recall the annihilating filter defined in (\ref{filter frequency domain}). We have
\begin{align}
    |\sum_{l=0}^{M_t} c_{m,l}W(\omega_{k-l})| 
    \le \sigma \cdot \sum_{l=0}^{M_t} |c_{m,l}| = 2^{M_t}\sigma,  \quad \forall k = -K+M_t,\cdots,K, 
\end{align}
which implies that the annihilating filter step may reduce the SNR of the signals. Therefore, the order of the annihilating filter should be chosen to balance the annihilating ability and the loss of SNR.\\

The subsampling strategy for SCAN-MUSIC(C) is similar to the one for SCAN-MUSIC. The length of the subsampled signal should be long enough to enable the application of the MUSIC algorithm after the application of the annihilating filter. See the discussion in Section \ref{subsec: detail2}.\\

We now introduce the algorithm of SCAN-MUSIC(C). Assuming we have the prior information about the cluster center, $\{\Bar{y}_t\}_{t=1}^T$, the acquirement of this information is discussed in Section \ref{subsec: cluster center}. For cluster $\nu_t$, we set $\mu = \Bar{y}_t$. Similar to SCAN-MUSIC, we pick $\kappa_T$, and $\kappa_E$ as the trust level and essential level respectively. Here, $\kappa_T$ should be chosen such that $[-D_t,D_t]\subset [-R_{tru},-R_{tru}]$ and $\kappa_E \sim \mathcal{O}(\sigma)$. We apply the centralization and Gaussian windowing as in (\ref{center-windowing}). After that, we start constructing the annihilating filter to remove the interference caused by other clusters. We denote the set of cluster centers where the annihilating filter shall be applied as $S_t$, defined by
\begin{align}\label{S_t}
    S_t := \{\Bar{y}_s \in \{\Bar{y}_t\}_{t=1}^T : \Bar{y}_s-\mu \in [-R_{ess},R_{ess}]\setminus [-R_{tru},-R_{tru}] \} \triangleq \{ \Bar{y}_{t_1},\cdots\Bar{y}_{t_{|S_t|}}\}.
\end{align}

The whole annihilating filter is then defined as 
\begin{align}
    Q = Q_{t_1} \ast \cdots \ast Q_{t_{|S_t|}},
\end{align}
and the implementation of filtering is given by
\begin{align}
Y_{filt} = \left(Y_{win} \ast Q \right)\left[|Q|+1:|Y_{win}|\right].
\end{align}
Here, we choose the middle part of the discrete convolution of $Y_{win}$ and $Q$ to avoid the boundary effect. \\

We expect the filtered data $Y_{filt}$ to contain signals generated only by line spectra in a single cluster. The problem then becomes narrow-band. This motivates us to apply the subsampling strategy. In comparison to the one in SCAN-MUSIC, we can choose a larger subsampling factor, which reduces not only the computational burden of the localization step but also the signal-to-noise ratio degradation caused by the annihilation filter, see the discussion in Section \ref{subsec: detail2}. The MUSIC algorithm is then used for the localization step. Finally, by scanning over all the cluster centers, the complete reconstruction can be achieved. The discussion on the reconstruction of cluster centers can be found in Section \ref{subsec: cluster center}.\\

We display the pseudo-code of SCAN-MUSIC(C) in \textbf{Algorithm \ref{algo_cluster_spectra}}. From the structure of SCAN-MUSIC(C), we emphasize that the centralization and Gaussian windowing steps can be easily paralleled. The pseudo-code of subsampling and filtering are shown in \textbf{Algorithm \ref{algo_subsample_cluster}} and \textbf{Algorithm \ref{algo_filter}} respectively.

\begin{algorithm}
	\caption{SCAN-MUSIC(C)}
	\label{algo_cluster_spectra}
	\DontPrintSemicolon
    \SetKwInOut{Input}{Input}
    \SetKwInOut{Output}{Output}
    \SetKwInOut{Init}{Initialization}
	\Input{Original measurement: $Y$, noise level: $\sigma$, sample interval: $[-\Omega,\Omega]$, sampling step size: $h$, Sufficient large number: $H$.} 
    \Input{Gaussian parameter: $\lambda$, spectral domain: $[R_1,R_2]$, Trust level: $L_{T}$, Essential level: $L_{E}$, Cluster center: $\Bar{y}$, line spectra number in each cluster: $N_0$,  Truncation level: $\gamma$.}
        \tcc{determining trust region and essential support} 
		$grid \gets 0:h:H $ \;
		$\mathcal{F}G_{pre} \gets \exp(-\frac{1}{4\lambda}\cdot grid.^{\wedge} 2)$\;
		$\mathcal{F}G_{tru} \gets \mathcal{F}G_{pre}(\mathcal{F}G_{pre}>L_{T})$, $\mathcal{F}G_{ess} \gets \mathcal{F}G_{ess}(\mathcal{F}G_{ess}>L_{E})$\;
        $R_{tru} \gets [|\mathcal{F}G_{tru}|\cdot h]$,$R_{ess} \gets [|\mathcal{F}G_{ess}|\cdot h]$\;
        \tcc{reconstruction the clustered line spectra in $[R_1,R_2]$} 
		\For{$m\ =\ 1:|\Bar{y}|$}{
            $Y_{win} \gets CGM(Y,\Omega,h,\lambda,\Bar{y}(m),\gamma) $\;
            $y_{center} \gets \Bar{y}(R_{tru}<\Bar{y}-\Bar{y}(m)<R_{ess})$, $S_m$ is returned according to (\ref{S_t})\;
            $Ord_m\gets(M_{t_1},\cdots,M_{t_{|S_m|}}) $, where $\{M_j\}_{j=1}^{|S_m|}$ are returned by some prior guess. \;
            $[Y_{sub},h_{sub}] \gets Sub_2(Y_{win},y_{center},Ord_m,h,N_0)$\;
            $Y_{filt} \gets AFSR(Y_{sub}, y_{center}, Ord_m, h_{sub} )$\;
            $\hat{y}_{pre}$ is returned by MUSIC in the trust region $[-R_{tru},R_{tru}]$ using $Y_{filt}$\;
            $\hat{y} \gets [\hat{y},\hat{y}_{pre}+\Bar{y}(m)]$\;
        }
	\Return $\hat{y}$\;
\end{algorithm}

\subsection{Sampling and Computational Complexity}\label{subsec: cluster complexity}

Notice that the structure of SCAN-MUSIC(C) is similar to SCAN-MUSIC. The two algorithms share the same sampling requirement,\begin{align}\label{cluster sample complex}
    K \sim \mathcal{O}(n).
\end{align}
For computational complexity, we assume that the line spectra are $(T, D, L)$-clustered. Recall that the number of spectra within a single cluster is bounded by $N_0$ and that the reconstruction problem is divided into $T$ subproblems. For each subproblem, the computational consists of three parts. First, the implementation of centralization and Gaussian windowing. Second, the implementation of subsampling and annihilating filters. Finally, the application of the MUSIC algorithm for reconstruction.\\

For the first part, the complexity is of the order $\mathcal{O}(K\log K)$. For the second part, we notice that to filter the line spectra in a single cluster, $\mathcal{O}(N_0)$ multipoles should be used. Furthermore, the subsampling step has at most $\mathcal{O}(K)$ complexity. Since $|Y_{sub}|\sim \mathcal{O}(N_0)\ll K$, the complexity of the filtering step can be neglected. After filtering, the MUSIC algorithm is applied to a square matrix of order $|Y_{filt}|\approx 2N_0$, which has a computational complexity  
\begin{align}
    \mathcal{O}\left( N_0^3 + R_{tru}NN_0^2\right),
\end{align}
where $N$ is the number of test points per unit interval. Overall, combined with (\ref{cluster sample complex}), the total computational complexity of SCAN-MUSIC(C) is of the order
\begin{align}\label{cluster-complex}
    \mathcal{O}\left(T\cdot( K\log K +N_0^3 + R_{tru}NN_0^2)\right).
\end{align}
Comparing the result above with the one in (\ref{random-complex}), we observe that the computational cost of SCAN-MUSIC(C) is significantly less than the one of SCAN-MUSIC. Since $N_0$, $R_{tru}$, $N$ are all fixed numbers and the sampling complexity is linear in $n$, we rewrite (\ref{cluster-complex}) as 
\begin{align}\label{re-cluster-complex}
    \mathcal{O}\left(T n\log n \right).
\end{align}

\subsection{Cluster Center Reconstruction}\label{subsec: cluster center}
In this section, we address the estimation of cluster centers that are needed in the implementation of SCAN-MUSIC(C).\\

In the regime where the separation distance between clusters is multiple of the Rayleigh limit, i.e. $L \gtrsim \mathcal{O}\left(\frac{\pi}{\Omega}\right)$, 
we can use SCAN-MUSIC to reconstruct the cluster centers. The strategy is as follows. First, we apply SCAN-MUSIC to estimate the cluster center. In this step, since we do not require a high-resolution reconstruction, we can choose relatively smaller $\lambda$ to accelerate the algorithm. Then, we cluster the closely separated estimated line spectra, especially the ones that are separated within one Rayleigh length. After that, we take the average over each cluster and get the estimation of cluster centers. The choice of parameters for applying SCAN-MUSIC for cluster center reconstruction is discussed in Section \ref{subsec: detail1}. The following experiments are designed to test the efficiency and accuracy of SCAN-MUSIC for the cluster center reconstruction.\\

We set the $\Omega = 1$, $L = 4\pi$, noise level is of the order $\mathcal{O}(10^{-3})$. Let there be $T$ clusters separated in the spectral domain and in each cluster, there are two line spectra with a separation distance of $1$. The sampling step size for each setup is according to the Nyquist-Shannon sampling step size. For SCAN-MUSIC, we pick the Gaussian parameter, the trust level, and the subsampling factor as $(\lambda,\kappa_T, F_{sub})=(70,0.9,15)$. We conducted 20 random experiments for each setup.
The running time and reconstruction error of SCAN-MUSIC for cluster center reconstruction is shown in Figure \ref{Fig: cluster_detect}.
We observe that SCAN-MUSIC achieves stable reconstruction in a very efficient way. \\

\begin{figure}[ht]
	\centering
    \subfloat[Running time of reconstruction for different cluster number.]{
	\includegraphics[width=0.4\textwidth]{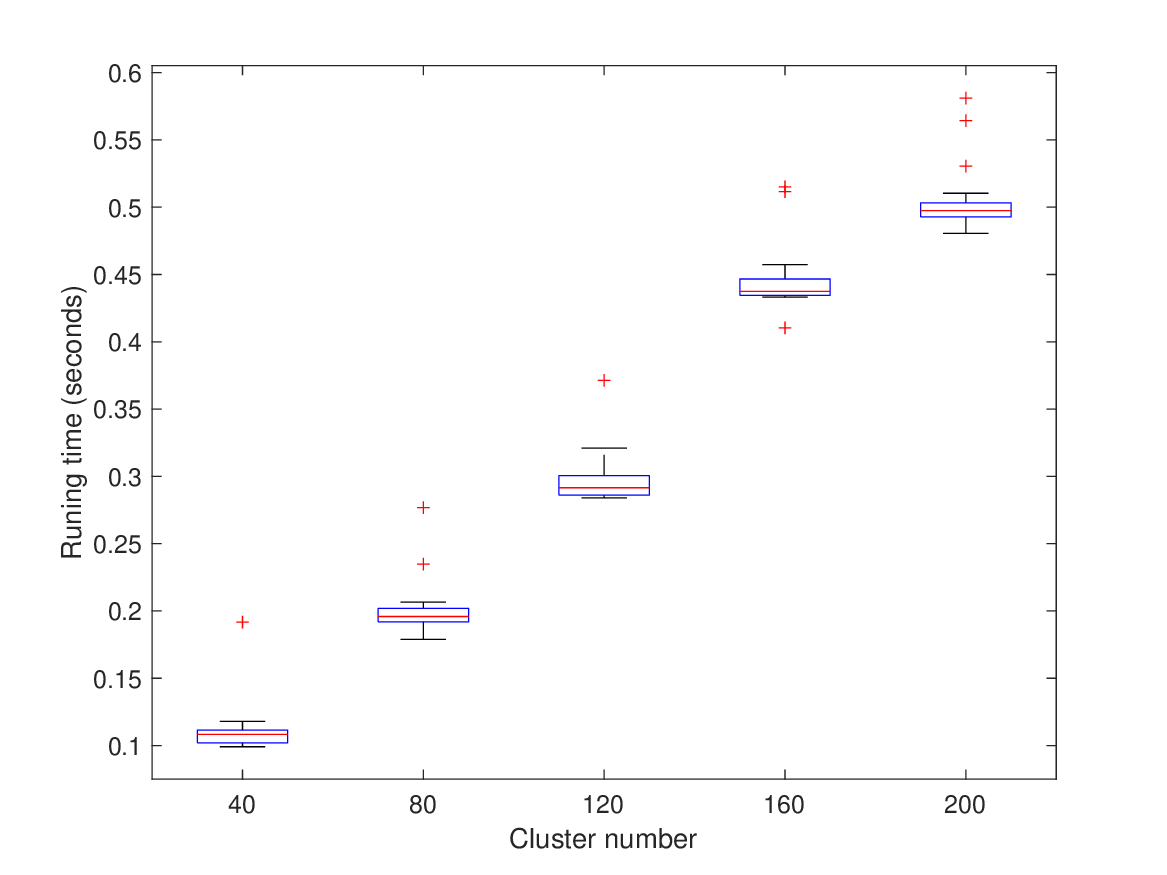}}
    \quad
    \subfloat[Reconstruction error for different cluster number.]{
	\includegraphics[width=0.4\textwidth]{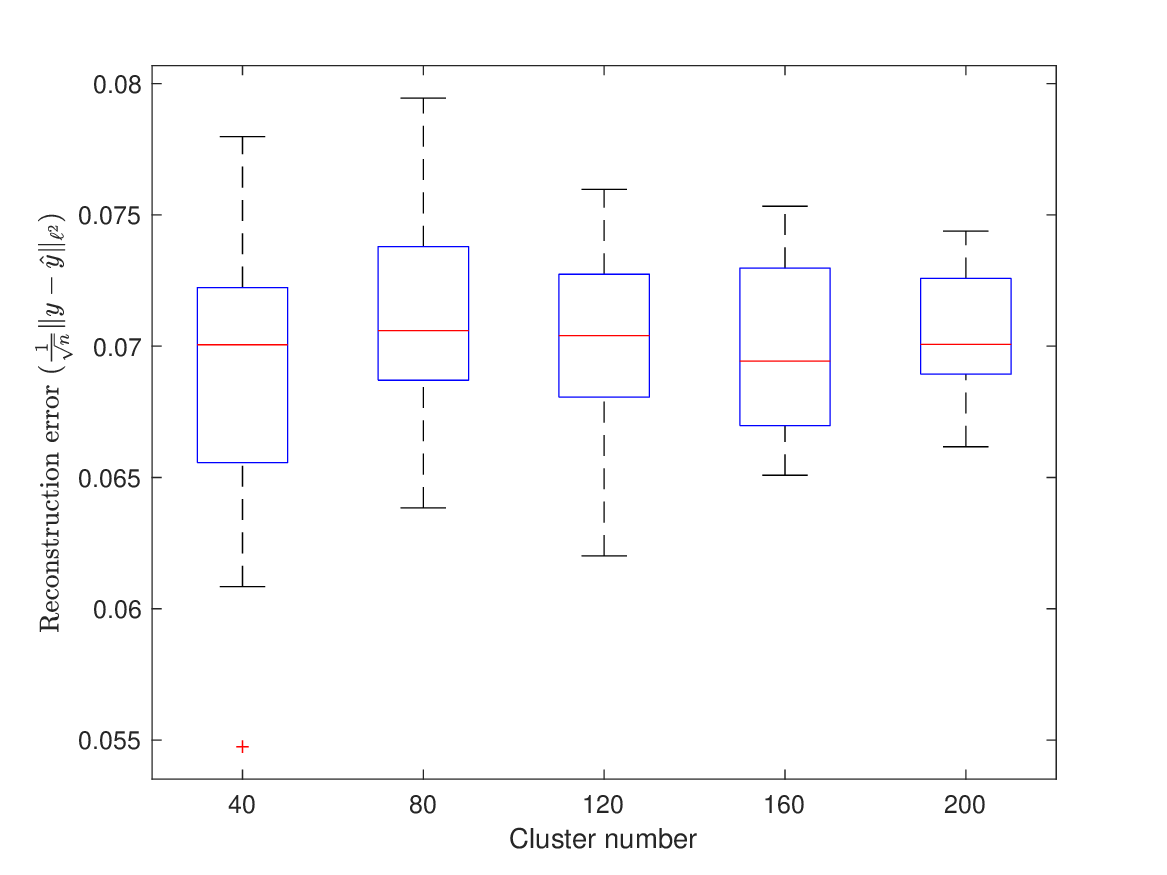}}
 \caption{Numerical result of SCAN-MUSIC for cluster center reconstruction}
\label{Fig: cluster_detect}
\end{figure}

We point out that other methods can be applied to detect the cluster structure in the regime that $L \gg \frac{\pi}{\Omega}$ i.e. the clusters are far-away separated.  For instance, the downsampling strategy proposed in \cite{liu2022measurement}, or FFT. If we further assume that the line spectra have positive amplitudes, the strategy in \cite{li2023note} can also be applied.

\section{Implementation Details and Extensions}\label{sec: details}
In this section, we provide more details
on the implementation of the proposed algorithms, especially on the choice of parameters. 

\subsection{Details of SCAN-MUSIC}\label{subsec: detail1}

As discussed in the previous sections, we choose the truncation level $\gamma\sim\mathcal{O}(\sigma)$, the essential level $\kappa_E\sim\mathcal{O}(\sigma)$, and the trust level $\kappa_T$ close to 1. \\

The choice of $\lambda$ determines the resolution of SCAN-MUSIC and $\lambda$ is tuned based on the trade-off between computational cost and the resolution, see Section \ref{subsec: tradeoff}. In practice, we can choose $\lambda\sim 10^2$. When SCAN-MUSIC is used for cluster detection, $\lambda$ can be smaller to reduce the computation cost (can also be the same for simplicity).\\ 

The essential level $R_{ess}$ can be derived from $\lambda$ and $\kappa_E$.  The subsampling factor $F_{sub}$ can be chosen as follows. Observe that the number of spectra in the essential region approximately equals $2R_{ess}\rho$ and therefore $4R_{ess}\rho$ samples are needed to apply MUSIC to reconstruct the spectra therein. Combining this observation and the Nyquist-Shannon criterion, we may choose $F_{sub}= \min \{\frac{|Y_{win}|}{4R_{ess}\rho}, \frac{\pi}{R_{ess}h}\}$.\\

\subsection{Details of SCAN-MUSIC(C)} \label{subsec: detail2}

The choice of $\lambda$, $\gamma$, $\kappa_E$ and $\kappa_T$ is similar to SCAN-MUSIC. For the choice of multipole order to apply annihilating filters, we refer to \cite{liu2022measurement} for the theoretical results for clustered line spectra in the frequency domain and we leave the complete theoretical discussion in the future work. Our numerical experience suggests that we can choose the multipole order to be 2 to 4 for the cluster consisting of 2 spectra, and 3 to 6 for the cluster consisting of 3 spectra. Specifically, to reconstruct the $t$-th cluster, it is necessary to assign a higher multipole order to the adjacent clusters. The intuition behind this is that the adjacent clusters have stronger interference to $t$-th cluster than those far away.\\

The subsampling strategy for SCAN-MUSIC(C) is similar to the one for SCAN-MUSIC. We need to ensure that the length of the subsampled signal is sufficient to apply the MUSIC algorithm after the filtering step. The subsampling factor can be chosen to be $F_{sub} = \lceil \frac{|Y_{win}|}{2N_0+|Q|} \rceil$, where $N_0$ is the maximum number of spectra number within a cluster, and $|Q|$ is the length of annihilating filter.



\section{Numerical Study}\label{sec: numerical study}
In this section, we conduct several groups of experiments to test the numerical behavior of the two proposed methods. We set $\Omega=1$ which implies the corresponding Rayleigh limit is $\pi$. For each different setup, we conduct $10$ independent random experiments. In this section, we do not perform parallel computing for the algorithms proposed in this paper.
\subsection{Random Spectra Reconstruction}
We present two groups of numerical experiments in this section to demonstrate the numerical performance of SCAN-MUSIC.\\

We first test and compare the efficiency of traditional MUSIC and SCAN-MUSIC for different signal ranges. We set the noise level to be $\mathcal{O}(10^{-2})$. To avoid the unstable reconstruction of the MUSIC algorithm, the line spectra are randomly generated in the interval $[0, R]$ with separation distances $5$ to $10$. For sampling step size, we pick $h = \frac{3}{R}$, which is near the Nyquist-Shannon sampling step size. For SCAN-MUSIC in different setups, we fix the Gaussian parameter, trust level, and truncation level as $(\lambda, \kappa_T, \gamma) = (170, 0.95, 10^{-2})$. The subsampling factor is then chosen according to the density of line spectra. Figure \ref{Fig: random-music} shows the averaged running time of the traditional MUSIC algorithm and SCAN-MUSIC. We observe that SCAN-MUSIC significantly reduces the computational cost.\\

\begin{figure}[ht]
	\centering
    \subfloat[Running time of SCAN-MUSIC for different signal range.]{
	\includegraphics[width=0.4\textwidth]{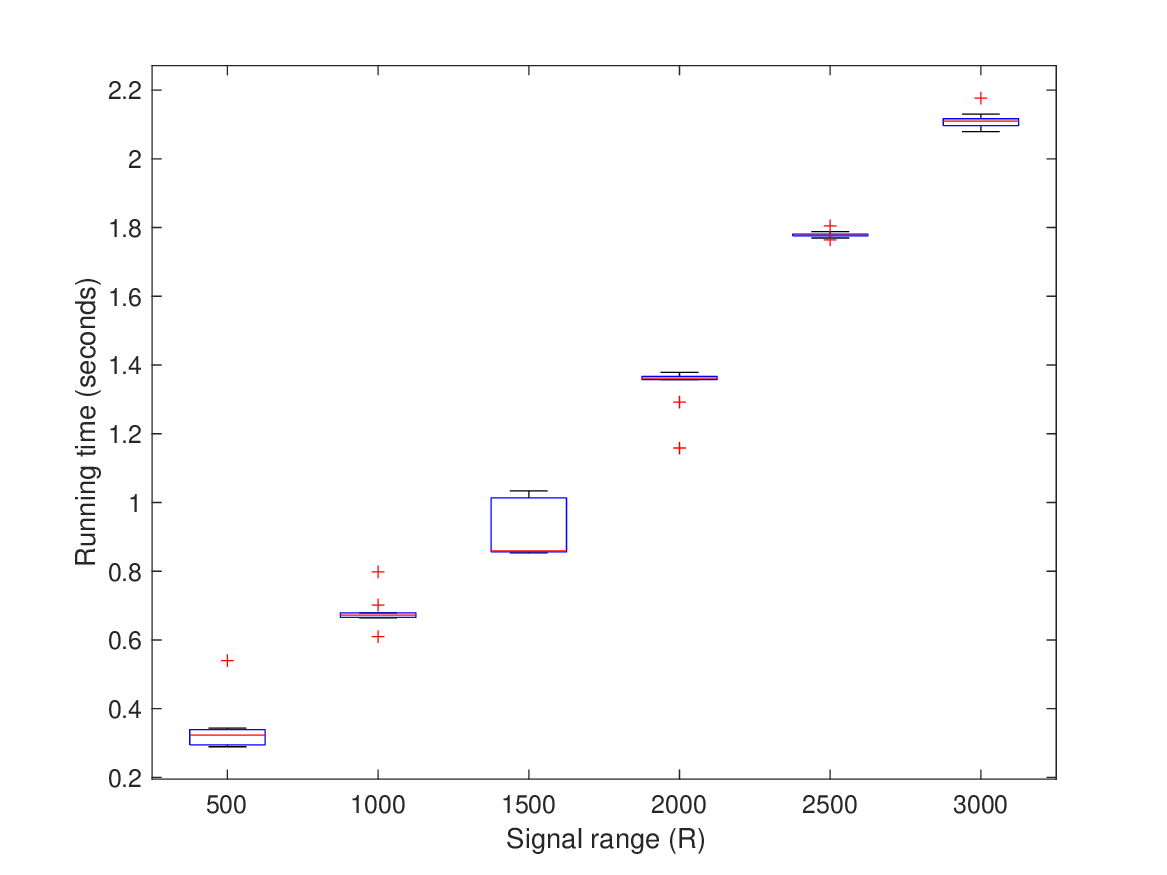}}
    \quad
    \subfloat[Averaged running time of MUSIC and SCAN-MUSIC for increasing signal range.]{
	\includegraphics[width=0.4\textwidth]{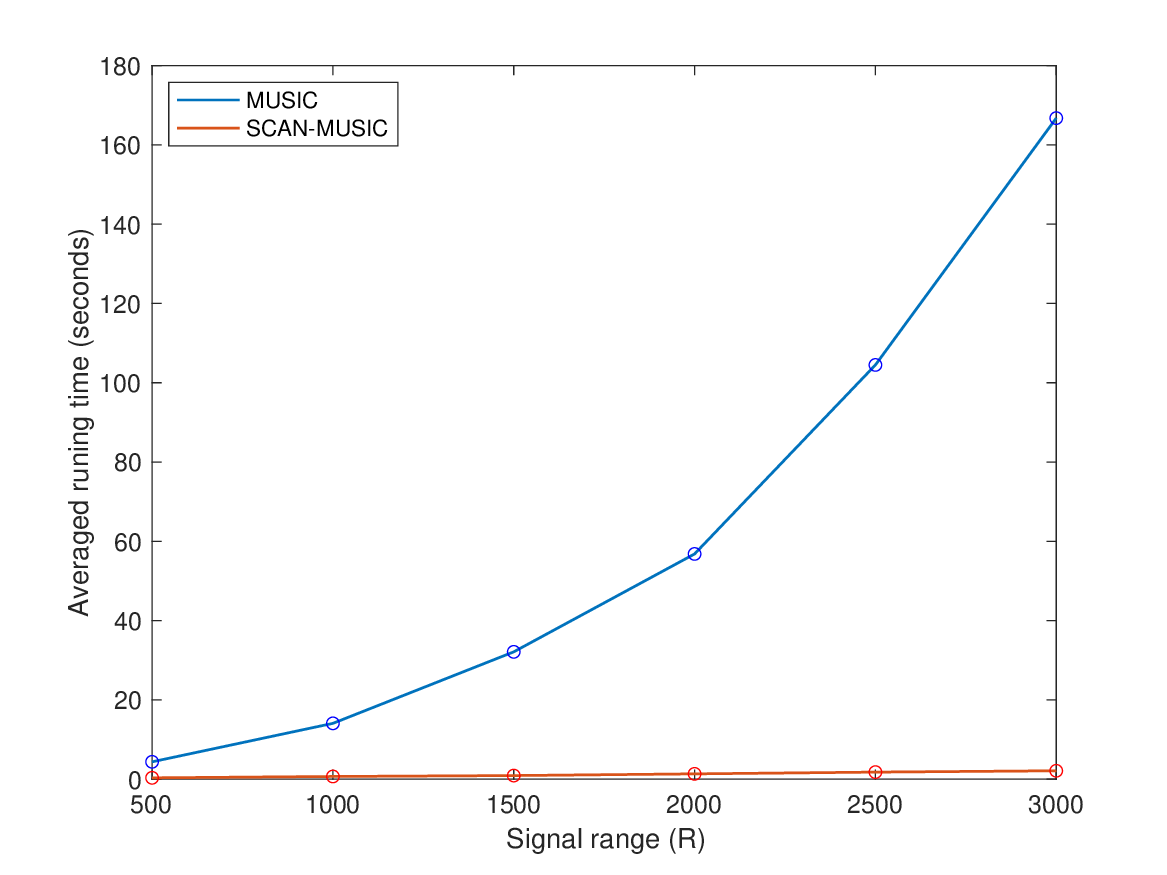}}
 \caption{Plot of the averaged running time of SCAN-MUSIC method with different signal range and the comparison with MUSIC algorithm.}
\label{Fig: random-music}
\end{figure}
We then test the stability of SCAN-MUSIC under different noise levels. We set the signal range $R=1000$. The line spectra are generated in the same way as in the previous experiment. We pick $h = 0.003$ and fix the Gaussian parameter $\lambda=170$, the trust level $\kappa_T = 0.95$. We choose the truncation level $\gamma = \sigma$, with $\sigma$ being the varying noise level in each group of experiments. The subsampling factor is chosen according to the spectra density as well as $\gamma$. We define the reconstruction error as $\frac{1}{\sqrt{n}}\|y-\hat{y}\|_{\ell^2}$. We present the numerical result in Figure \ref{Fig: different SNR}. We observe that the SCAN-MUSIC returns a stable estimation of the line spectra under the tested noise levels.
\begin{figure}
\centering
\includegraphics[width=0.7\textwidth]{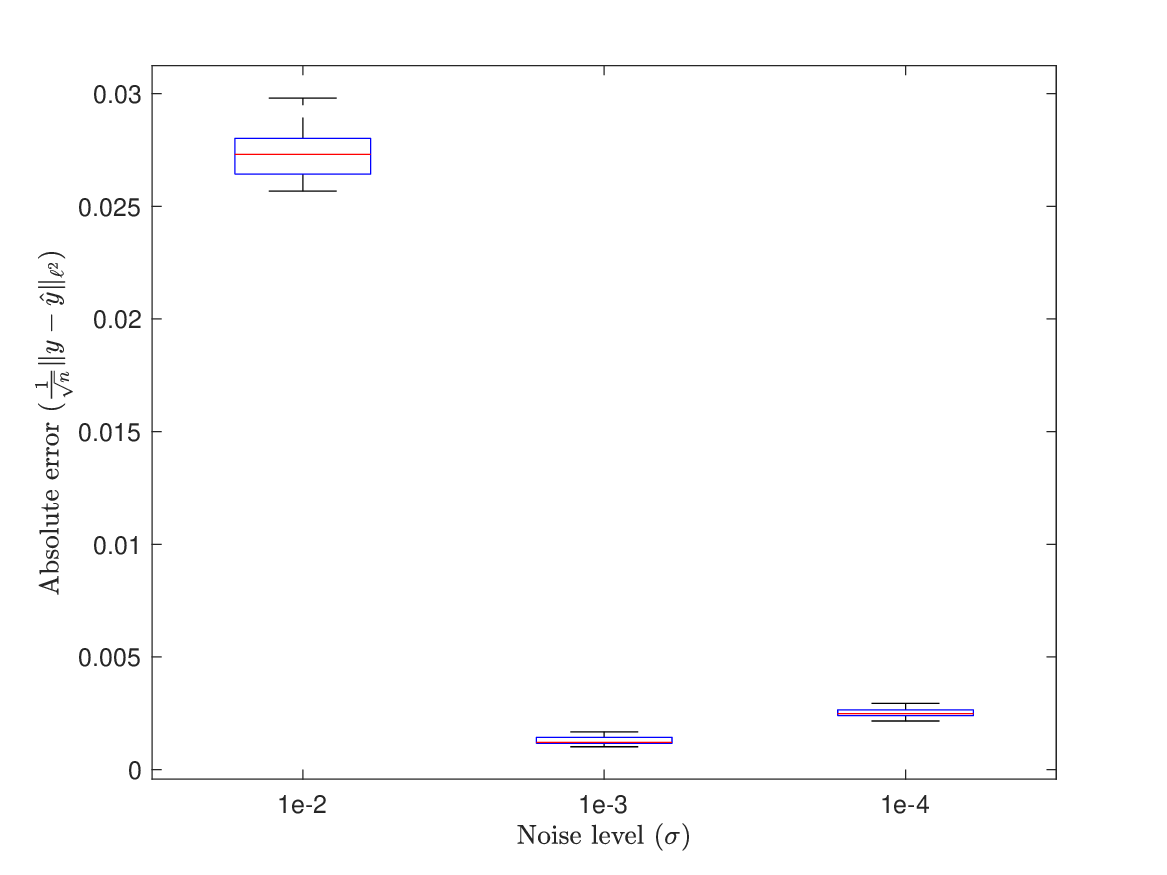}
\caption{Boxplot of reconstruction error of \textbf{Algorithm \ref{algo_spectra}} under different noise level.}
\label{Fig: different SNR}
\end{figure}

\subsection{Clustered Spectra Reconstruction}
In this section, we present two groups of numerical experiments to demonstrate the behavior of SCAN-MUSIC(C). We first apply SCAN-MUSIC to detect the cluster centers and apply SCAN-MUSIC(C) to reconstruct the line spectra. Since the running time of SCAN-MUSIC(C) is shorter than the running time of SCAN-MUSIC, we do not compare the running time of SCAN-MUSIC(C) and traditional MUSIC in this section.\\

For the first group, we set 2 line spectra in each cluster, the separation distance of the two line spectra is 1, and the distance between clusters is $4\pi$. We set $\Omega=1$, $\sigma = 10^{-3}$ and $h=0.001$. In both cluster center reconstruction and spectra reconstruction, we pick the Gaussian parameter, the trust level, and the subsampling factor as $(\lambda,\kappa_T, F_{sub})=(70,0.9,60)$ for the cluster center reconstruction. For the spectra reconstruction, we pick the multipole order for the clusters by the following strategy: we choose the multipole order to be 3 for the nearest cluster and 2 for others. The running time of SCAN-MUSIC(C) combined with SCAN-MUSIC and the reconstruction error for the reconstructed position is shown in Figure \ref{Fig: random-time-2}.\\
\begin{figure}[ht]
	\centering
    \subfloat[Running time of reconstruction for different cluster number.]{
	\includegraphics[width=0.4\textwidth]{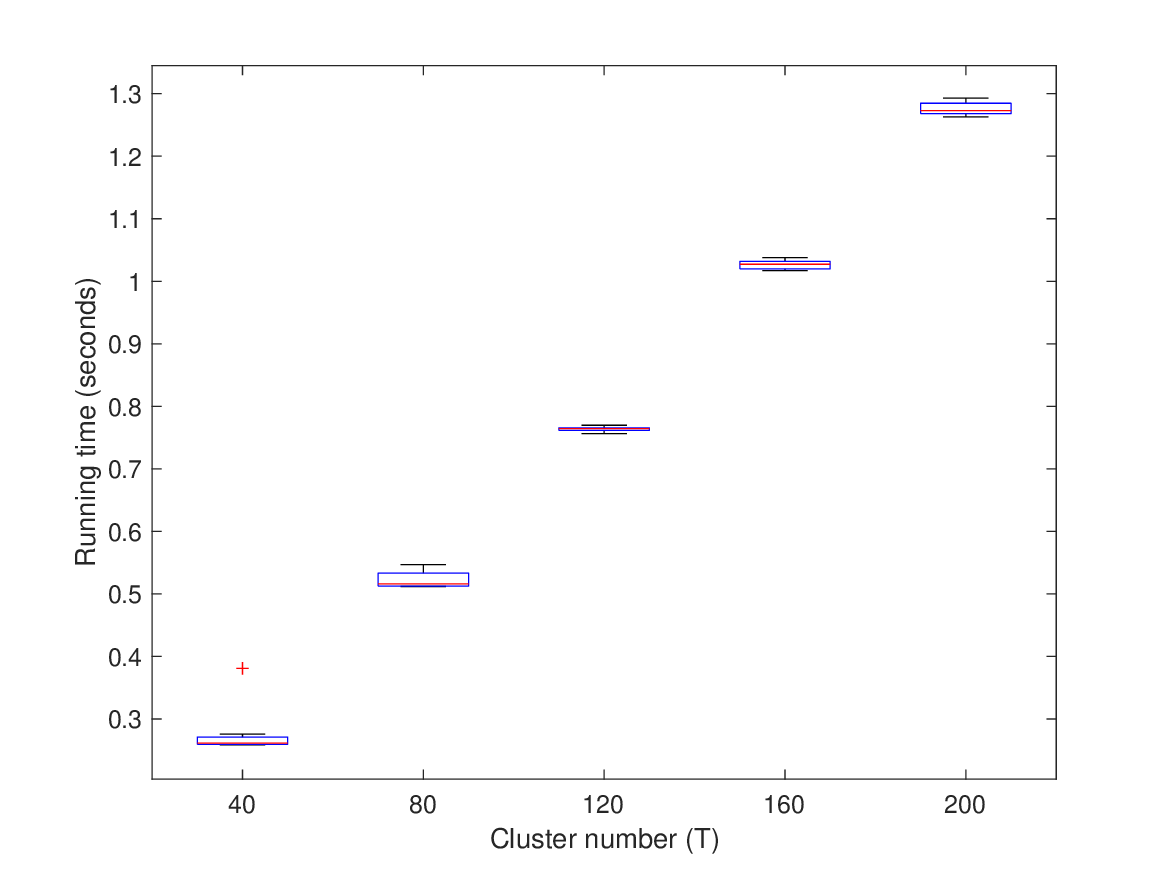}}
    \quad
    \subfloat[Reconstruction error for different cluster number.]{
	\includegraphics[width=0.4\textwidth]{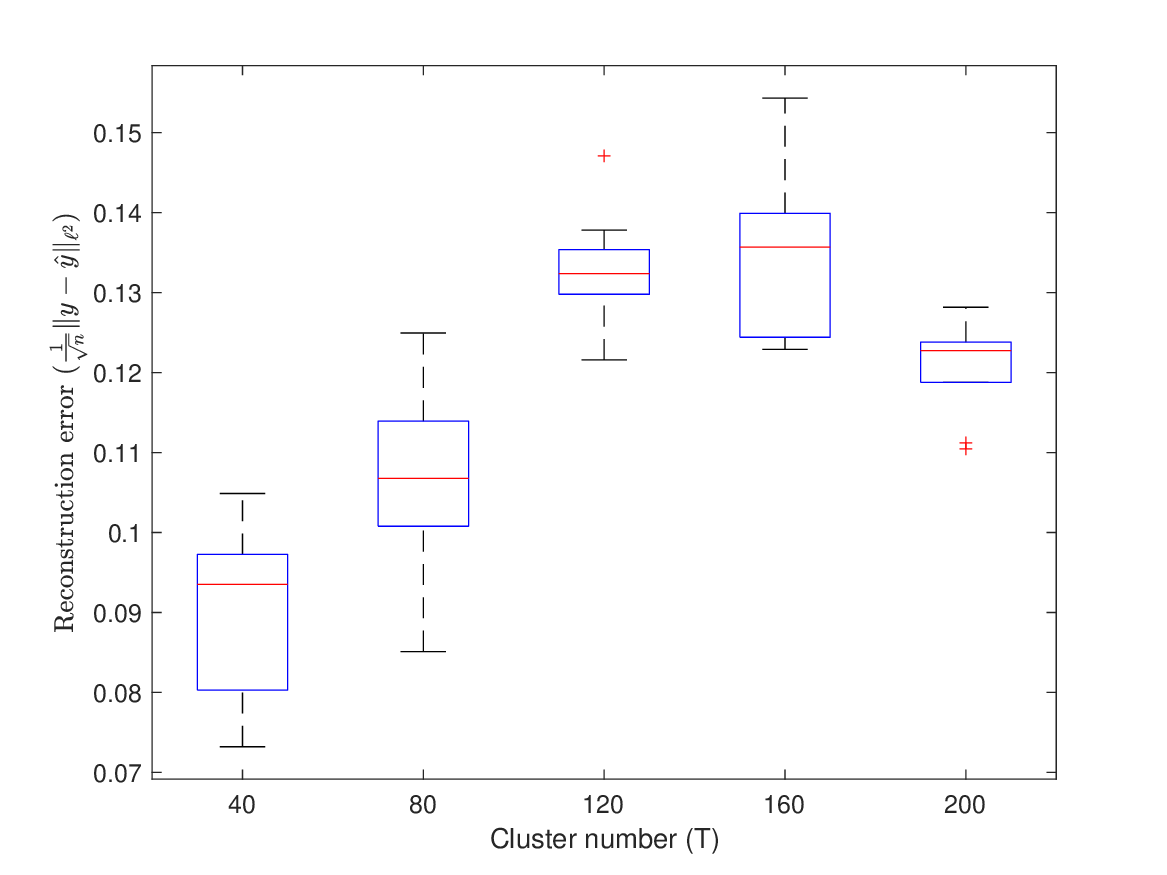}}
 \caption{Numerical result for cluster spectra reconstruction (2 spectra within a cluster)}
\label{Fig: random-time-2}
\end{figure}

For the second group, we set 3 line spectra in each cluster, the separation distance of the three line spectra is 1.2 and the distance between clusters is $7\pi$. We set $\Omega=1$, $\sigma = 10^{-4}$ and $h=0.001$. In both cluster center reconstruction and spectra reconstruction, we pick the Gaussian parameter, the trust level, and the subsampling factor as $(\lambda,\kappa_T, F_{sub})=(70,0.9,60)$. For the spectra reconstruction, we pick the multipole order for the clusters by the following strategy: we choose the multipole order to be 5 for the nearest cluster and 3 for others. The running time of SCAN-MUSIC(C) combined with SCAN-MUSIC and the reconstruction error for the reconstructed position is shown in Figure \ref{Fig: random-time-3}.
\begin{figure}[ht]
	\centering
    \subfloat[Running time of reconstruction for different cluster number.]{
	\includegraphics[width=0.4\textwidth]{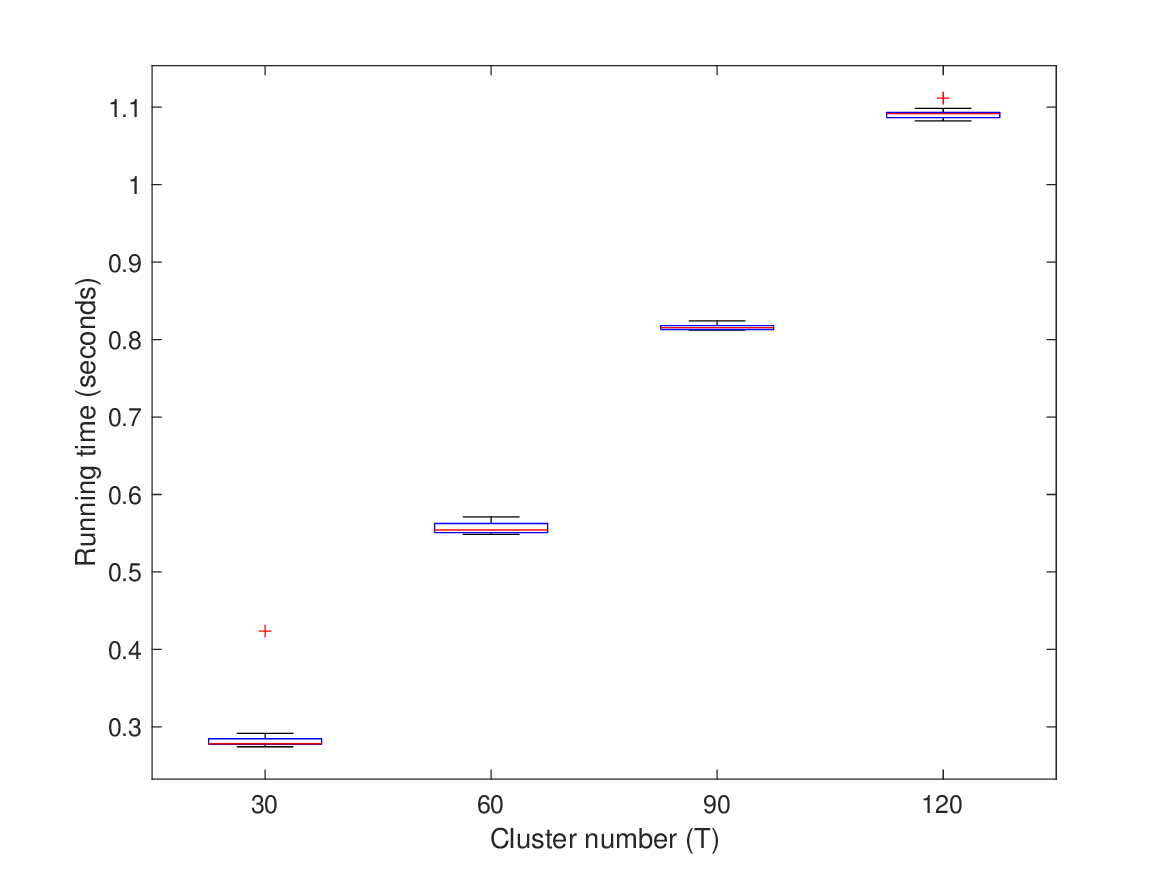}}
    \quad
    \subfloat[Reconstruction error for different cluster number.]{
	\includegraphics[width=0.4\textwidth]{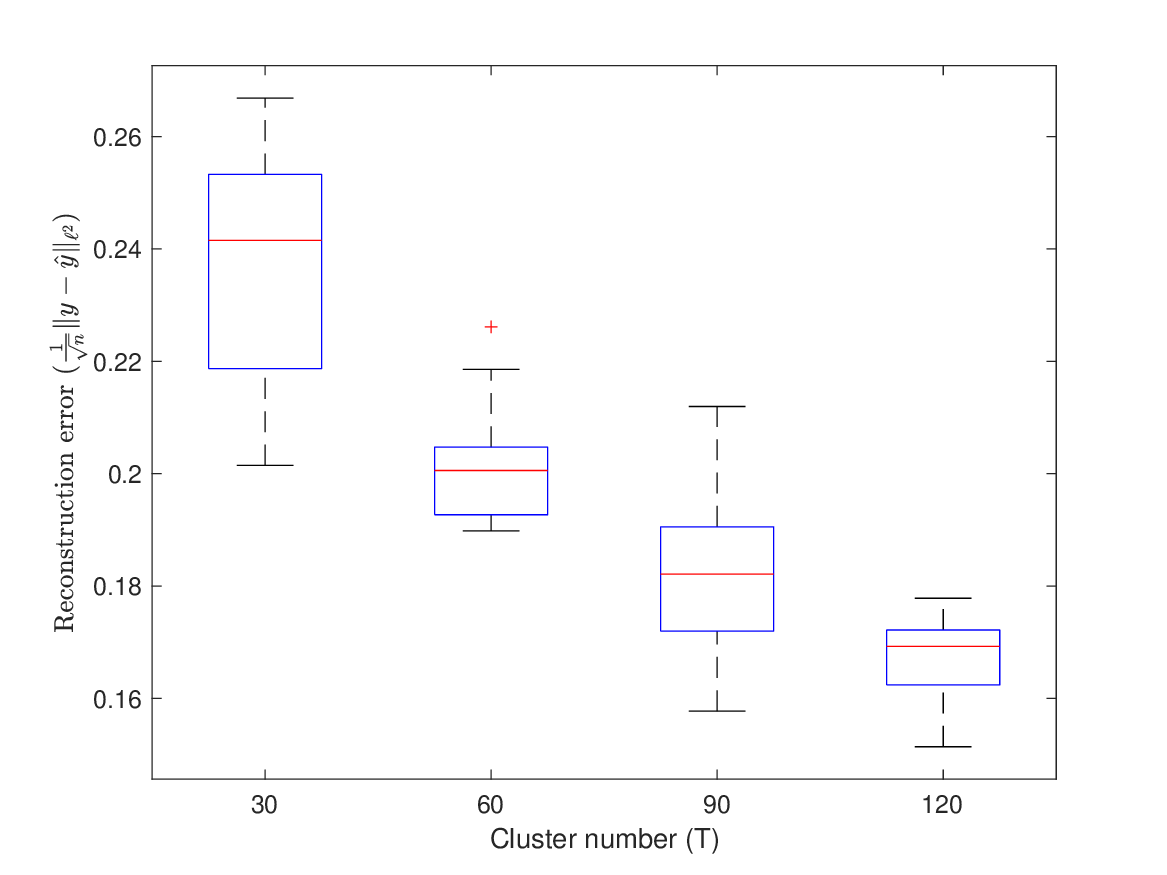}}
 \caption{Numerical result for cluster spectra reconstruction (3 spectra within a cluster)}
\label{Fig: random-time-3}
\end{figure}

\subsection{Efficiency Test}
In this section, we demonstrate the efficiency of SCAN-MUSIC by comparing it with the Superfast LSE method\footnote{The code for Superfast LSE is from https://github.com/thomaslundgaard/superfast-lse}  proposed in \cite{hansen2018superfast} that has demonstrated estimation accuracy at least as good as other current methods while being order of magnitudes faster. We consider both settings of random and clustered line spectra. \\

First, we set the line spectra to be randomly distributed on the spectral domain with minimum separation distance 2 Rayleigh limits. We fix the noise level to be $\mathcal{O}(10^{-2})$. For SCAN-MUSIC, we adopt the sampling step size to be $h=0.001$, and sample the interval $[-1,1]$. We choose the parameters $(\lambda,\kappa_T,F_{sub})=(70,0.9,60)$. For Superfast LSE, we adopt the same sample numbers. We conducted 20 random experiments for each setup. Figure \ref{Fig: random_lse_compare} shows the averaged running time for two different methods under different numbers of line spectra.\\

Second, we consider the cluster settings of line spectra. We set 2 line spectra with separation distance $1/\pi$ Rayleigh limit in each cluster. The distance between clusters is 6 Rayleigh limits. We fix the noise level to be $\mathcal{O}(10^{-3})$. For SCAN-MUSIC, we adopt the sampling step size to be $h=0.001$, and sample the interval $[-1,1]$. We choose the parameters $(\lambda,\kappa_T,F_{sub})=(100,0.95,60)$. For Superfast LSE, we adopt the same sample numbers. As reported in \cite{hansen2018superfast} when the separation distance is below the Rayleigh limit, Superfast LSE may not reconstruct the ground truth of all line spectra. This is indeed the case here. However, Scan-MUSIC succeeds in all reconstructions. Nevertheless, we pick the successful cases for both methods and report the averaged running time of the two methods in Figure \ref{Fig: cluster_lse_compare}.\\

\begin{figure}[ht]
	\centering
    \subfloat[Running time of reconstruction for random spectra.]{
    \label{Fig: random_lse_compare}
	\includegraphics[width=0.4\textwidth]{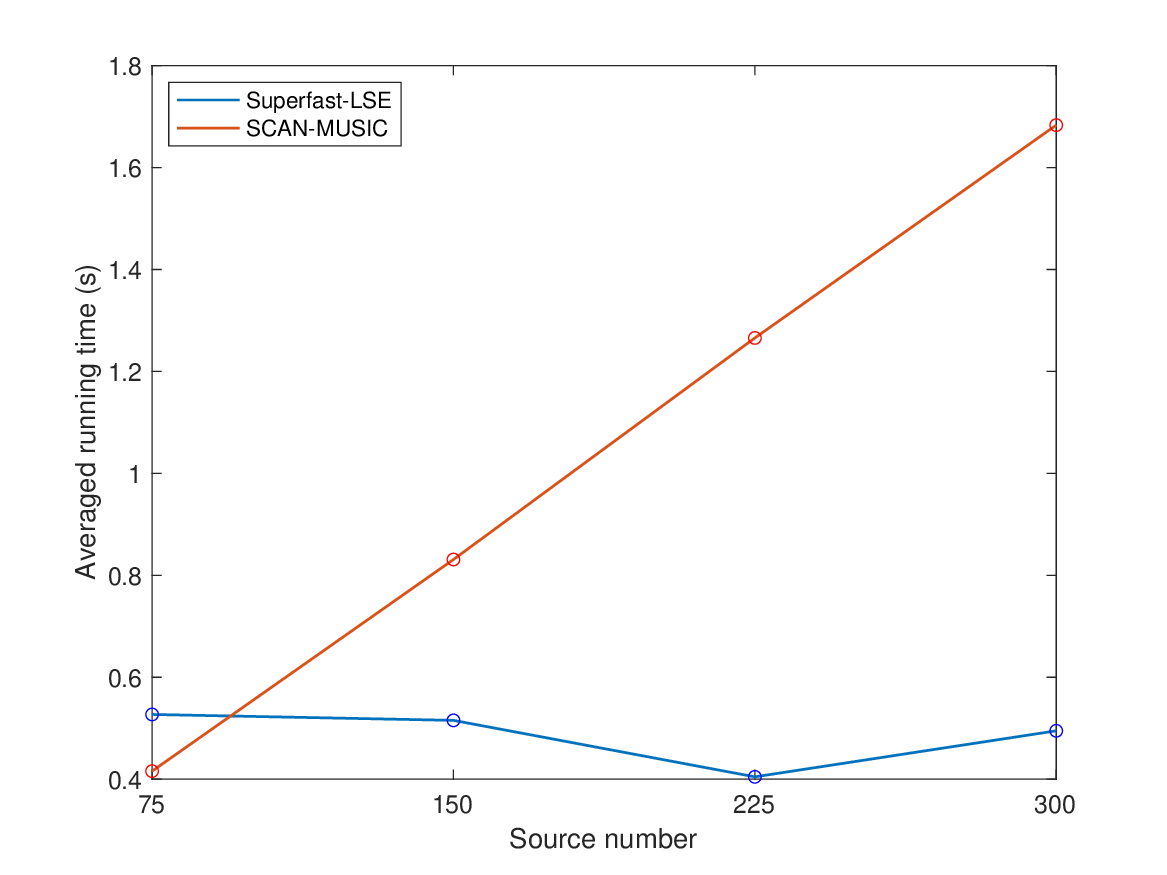}}
    \quad
    \subfloat[Running time of reconstruction for clustered spectra.]{
    \label{Fig: cluster_lse_compare}
	\includegraphics[width=0.4\textwidth]{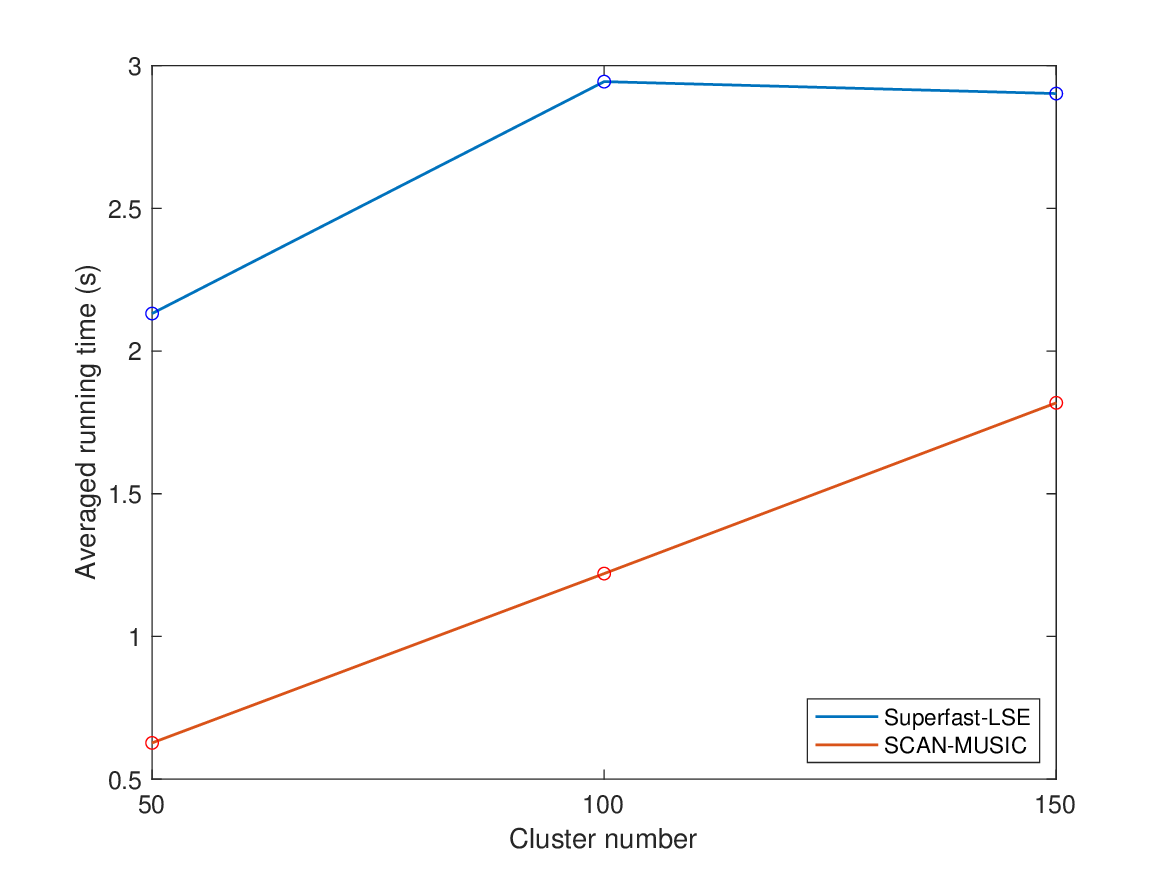}}
 \caption{Efficiency comparison for SCAN-MUSIC and Superfast LSE.}
\label{Fig: Efficiency compare}
\end{figure}

We observe that in the two groups of numerical experiments above, the averaged running time is linear to the source number. This is due to the fact that the sampling step size $h$ is fixed in all these numerical experiments and hence the sampling complexity $K$ is fixed as a constant. 
As a result, the computational complexity in both cases scales as $\mathcal{O} (n)$ according to formula (\ref{random-complex}) and (\ref{cluster-complex}) respectively.\\

Through the above numerical experiments, we observe that SCAN-MUSIC has an efficiency comparable to the state-of-the-art method, the Superfast LSE method. Moreover, it has unique strength in reconstructing line spectra with cluster structures where their separation distances may be below the Rayleigh limit. We also want to point out that SCAN-MUSIC can still be improved by applying computational techniques, like parallel computing for instance.

\section{Discussion} \label{sec: discussion}
In this paper, we propose efficient super-resolution algorithms SCAN-MUSIC and SCAN-MUSIC(C) for line spectral estimation. We present along with numerical experiments, the algorithm, error estimates, computational limit, sampling complexity, and computational complexity. SCAN-MUSIC adopts the Gaussian windowing, centralization, and MUSIC algorithm to achieve complete spectra reconstruction. For clustered line spectra, the annihilating filter technique is introduced and SCAN-MUSIC(C) is proposed. The two proposed algorithms can be easily paralleled.\\

We show that in the regime we consider, the two proposed algorithms achieve the optimal order of sampling complexity $\mathcal{O}(n)$ and computational complexity $\mathcal{O}(n^2\log n)$ (by appropriate choice of subsampling factor). We expect the two methods can be combined with strategies in multiple measurements to solve the LSE problem with multiple measurements and achieve reconstruction with higher resolution.

\section{Appendix}

\subsection{Error Estimates of SCAN-MUSIC}\label{subsec: error analysis}
To fix the notations in this section, we introduce the following notations.
We define $B = \max\{s\in\mathbb{N}:sh\le \Omega_{win} \}$, where $\Omega_{win}$ is defined in (\ref{def: omega_win}). We denote $\omega_{k} = kh$, for $k\in \mathbb{Z}$, where $h$ is the sampling step size defined in Section \ref{sec: introduction}.  We further denote
\begin{align*}
    &W_{cen} = (\mathcal{S}_\mu W (\omega_{-K}),\cdots,\mathcal{S}_\mu W (\omega_{K})), \quad
    f_{cen} = (\mathcal{S}_\mu f (\omega_{-K}),\mathcal{S}_\mu f (\omega_{-K+h}),\cdots,\mathcal{S}_\mu f (\omega_{K})),\\
    &
    \boldsymbol{G}_\lambda = h(\cdots,G_\lambda(\omega_{-1}),G_\lambda(\omega_{0}),G_\lambda(\omega_{1}),\cdots).
\end{align*}
For given $\eta\in\Real$, we define the function $\varphi_{\eta} (\zeta)$ by 
\begin{align}
    \varphi_{\eta} (\zeta) = \sum_{j=1}^n a_je^{i(y_j-\mu)\zeta}\cdot G_{\lambda}(\eta-\zeta),
\end{align}
where $\zeta \in \mathbb{C}$. Let $F_{\eta}(\xi)$ be the restriction of $F_{\eta} (\zeta)$ on the real line. It is easy to check that $F_{\eta} (\zeta)$ is analytic on the complex plane.\\
We notice that in the ideal case, the centralization and Gaussian windowing are characterized by the continuous convolution by the following identity:
\begin{align}
    \mathcal{S}_{\mu}(f) \ast G_{\lambda}
   = \sum_{j=1}^{n} a_j e^{-\frac{(y_j-\mu)^2}{4\lambda}}e^{i(y_j-\mu)\omega},
\end{align}
In practice, however, only the discrete band-limited noisy measurement is available, and the Gaussian window is also truncated by the level $\gamma$ as shown in (\ref{Y_win}).
The total error can be characterized by four factors: 1. the discretization error caused by the transfer of continuous convolution into discrete convolution, 2. the model error for the discrete case which is caused by the band-limited data, 3. the truncation error caused by the truncation of Gaussian window, 4. the noise error. We denote them by $\mathcal{E}_1$, $\mathcal{E}_2$, $\mathcal{E}_3$, $\mathcal{E}_4$ respectively. Mathematically, we characterize the above errors in the following 
\begin{defn}
For $\omega \in \Real$, $b\in\mathbb{Z}$ with $b\le B$, we define
    \begin{align} 
        &\mathcal{E}_1 :=\sup_{\omega\in\Real} |\mathcal{S}_{\mu}(f) \ast G_{\lambda}(\omega) - h\cdot \sum_{k\in\mathbb{Z}} F_{\omega}(\omega_{k})| \\
        &\mathcal{E}_2 := \max_{b = -B,\cdots,B} h\cdot \left|\sum_{k\in\mathbb{Z}} F_{\omega_b}(\omega_{k})-\sum_{k = -K}^K F_{\omega_b}(\omega_{k}) \right|\\
        &\mathcal{E}_3 := \| f_{cen}\ast(\boldsymbol{G}_\lambda-G_{\lambda,\Gamma}) \|_{\ell^\infty} \\
        &\mathcal{E}_4 := \|W_{cen}\ast G_{\lambda,\Gamma} \|_{\ell^\infty}
    \end{align}
\end{defn}
\begin{thm}
    Assume that the choice of $\gamma$ satisfies $\frac{\gamma}{\sqrt{-\ln \gamma}}\le\frac{\sqrt{\pi}}{\|\nu\|_{TV}\sigma}$. Then for any given $C>0$, we have 
    \begin{align}\label{e1+e2+e3+e4}
        \mathcal{E}_{total} :=
        \mathcal{E}_1+\mathcal{E}_2+\mathcal{E}_3+\mathcal{E}_4 
        \lesssim 
        \frac{2e^{\lambda C^2} \|\nu \|_{TV}}{e^{2\pi C/h}-1}+
        3\sigma.
    \end{align}
\end{thm}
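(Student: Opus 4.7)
My plan is to estimate the four components $\mathcal{E}_1,\mathcal{E}_2,\mathcal{E}_3,\mathcal{E}_4$ separately: three of them contribute only $\lesssim\sigma$ to $\mathcal{E}_{total}$, while the analytic-quadrature error $\mathcal{E}_1$ alone produces the $\tfrac{2e^{\lambda C^2}\|\nu\|_{TV}}{e^{2\pi C/h}-1}$ term. I would first dispose of the three ``soft'' errors. Young's inequality for convolutions controls $\mathcal{E}_4$ by $\|W_{cen}\|_{\ell^\infty}\|G_{\lambda,\Gamma}\|_{\ell^1}\le\sigma$, using that the sampled Gaussian is (approximately) $\ell^1$-normalized, $h\sum_k G_\lambda(kh)\approx 1$. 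For $\mathcal{E}_3$, Young's inequality together with $\|f_{cen}\|_{\ell^\infty}\le\|\nu\|_{TV}$ and the Gaussian tail bound $h\sum_{|k|>\Gamma}G_\lambda(kh)\lesssim \gamma/\sqrt{-\ln\gamma}$ (a standard Mills-ratio type estimate combined with $e^{-\lambda(\Gamma h)^2}\le\gamma$ from \eqref{def:gamma}) converts the stated hypothesis on $\gamma$ into $\mathcal{E}_3\lesssim\sigma$. For $\mathcal{E}_2$, I would rewrite
\[
h\!\sum_{|k|>K}F_{\omega_b}(\omega_k)=h\!\sum_{|k|>K}\mathcal{S}_\mu f(\omega_k)\,G_\lambda(\omega_b-\omega_k)
\]
and recognise this as the Riemann sum for the continuous model error $\mathcal{E}_{mod}(\omega_b)$ bounded in the previous theorem; since $|\omega_b|\le Bh\le\Omega_{win}$, the definition \eqref{omega_win} of $\Omega_{win}$ forces this error to be $\le\sigma$, and the discrete-versus-continuous discrepancy is controlled by the same Gaussian-tail argument as above.

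The heart of the proof is $\mathcal{E}_1$, the trapezoidal-rule error for $\int_{\mathbb{R}}F_\omega(\xi)\,d\xi=\mathcal{S}_\mu f\ast G_\lambda(\omega)$ on the uniform grid $\{kh\}_{k\in\mathbb{Z}}$. Since $F_\omega(\zeta)=\sum_{j=1}^n a_j e^{i(y_j-\mu)\zeta}G_\lambda(\omega-\zeta)$ is entire in $\zeta$ and decays faster than any polynomial on every horizontal line, I would invoke the classical Stenger/Poisson contour-shift estimate: for any $C>0$,
\[
\left|\int_{\mathbb{R}}F_\omega(\xi)\,d\xi-h\sum_{k\in\mathbb{Z}}F_\omega(kh)\right|\le\frac{N_++N_-}{e^{2\pi C/h}-1},\qquad N_\pm:=\int_{\mathbb{R}}|F_\omega(\xi\pm iC)|\,d\xi.
\]
A direct computation with $|G_\lambda(\omega-\xi\mp iC)|=\sqrt{\lambda/\pi}\,e^{\lambda C^2}e^{-\lambda(\omega-\xi)^2}$, $|e^{i(y_j-\mu)(\xi\pm iC)}|=e^{\mp(y_j-\mu)C}$, and the normalization $\int\sqrt{\lambda/\pi}\,e^{-\lambda(\omega-\xi)^2}\,d\xi=1$ gives $N_\pm\le e^{\lambda C^2}\sum_j|a_j|\,e^{\mp(y_j-\mu)C}$. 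Splitting the sum over $j$ by the sign of $y_j-\mu$ and shifting the contour into the opposite half-plane for each piece bounds every $e^{\mp(y_j-\mu)C}$ by $1$, which yields $N_++N_-\le 2e^{\lambda C^2}\|\nu\|_{TV}$.

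The main obstacle is precisely the control of the exponentials $e^{\mp(y_j-\mu)C}$: a naive single-contour argument would produce an extra $\cosh\bigl((y_j-\mu)C\bigr)$ per source, so the sign-splitting contour trick above (or equivalently, the interpretation that the dominant $y_j$'s lie in the essential support of the Gaussian window, where $|y_j-\mu|C$ is absorbed into universal constants) is what yields the clean $2e^{\lambda C^2}\|\nu\|_{TV}$ numerator and justifies the ``$\lesssim$'' in the statement. Summing the four contributions then gives \eqref{e1+e2+e3+e4}.
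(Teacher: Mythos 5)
Your overall architecture is the same as the paper's: bound $\mathcal{E}_2,\mathcal{E}_3,\mathcal{E}_4$ by $\sigma$ each via Young's inequality and Gaussian-tail estimates, and handle $\mathcal{E}_1$ with the Stenger/Trefethen--Weideman contour estimate for the trapezoidal rule. The three ``soft'' bounds are fine; in fact you attach the truncation and noise estimates to the correct errors (the paper's Steps 3 and 4 have their labels interchanged), and you implicitly read the hypothesis on $\gamma$ as $\gamma/\sqrt{-\ln\gamma}\le\sqrt{\pi}\,\sigma/\|\nu\|_{TV}$, which is what the paper actually uses in its Step 4.

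The gap is in $\mathcal{E}_1$. You correctly identify the obstacle: on the shifted line, $|e^{i(y_j-\mu)(\xi\pm iC)}|=e^{\mp(y_j-\mu)C}$, so the honest bound is $N_\pm\le e^{\lambda C^2}\sum_j|a_j|e^{\mp(y_j-\mu)C}$ (the paper's own proof silently replaces $e^{i(y_j-\mu)(\xi+ic)}$ by $e^{i(y_j-\mu)\xi}$ and never confronts this). But your sign-splitting repair does not work. By Poisson summation the quadrature error is $\sum_{m\neq 0}\widehat{\varphi_\omega}(2\pi m/h)$, and for each alias the direction of the admissible contour shift is dictated by the sign of $m$ (down for $m>0$, up for $m<0$), not by the sign of $y_j-\mu$; every summand $j$ therefore requires both $N_+$ and $N_-$, and for each $j$ exactly one of $e^{\pm(y_j-\mu)C}$ is large (up to $e^{2RC}$). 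Nor is the factor removable by a cleverer contour: for $w_j(\xi)=a_je^{i(y_j-\mu)\xi}G_\lambda(\omega-\xi)$ one computes $|\widehat{w_j}(2\pi m/h)|=|a_j|\,e^{-(2\pi m/h-(y_j-\mu))^2/(4\lambda)}$, so when $|y_j-\mu|$ approaches the Nyquist width $2\pi/h$ (which $h\le\pi/R$ and $|y_j-\mu|\le 2R$ permit) the $|m|=1$ alias is of order $|a_j|$ --- a genuine aliasing contribution. Your fallback remark about the essential support of the window does not help either: the Gaussian in $\varphi_\omega$ acts on the frequency variable $\xi$, so all $n$ spectra enter with undiminished amplitude, and the quenching factor $e^{-(y_j-\mu)^2/(4\lambda)}$ appears in the value of the convolution integral, not in its quadrature error. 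What your argument actually establishes is $\mathcal{E}_1\le 2e^{\lambda C^2}e^{C\max_j|y_j-\mu|}\|\nu\|_{TV}/(e^{2\pi C/h}-1)$, which recovers the stated bound only under strict oversampling (roughly $2\pi/h-2R\gg\sqrt{\lambda}$); as written, the key step fails.
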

\begin{proof}
We estimate the four errors one by one.\\

\textbf{Step 1.} For any $\omega \in \Real$, and any fixed $C >0$, it is clear that $\varphi_{\omega}(\zeta)$ is analytic in the strip $|\Im(\zeta)|<C$ and $\varphi_{\omega}(\zeta) \rightarrow 0$ uniformly as $|\zeta|\rightarrow \infty$ in the strip. For any given $c\in(-C,C)$, we have
\begin{align}
    \int_\Real |\varphi_{\omega}(\xi+ic)| d\xi
    & = \int_\Real \left|\sum_{j=1}^n a_je^{i(y_j-\mu)\xi}\cdot G_{\lambda}(\omega-\xi-ic)\right| d\xi \notag \\
    & \le \|\nu \|_{TV} \cdot \sqrt{\frac{\lambda}{\pi}}\int_\Real |e^{-\lambda(\omega-\xi-ic)^2}| d\xi \notag \\
    & = \|\nu \|_{TV}\cdot e^{\lambda c^2} < e^{\lambda C^2} \cdot \|\nu \|_{TV}.
\end{align}
By Theorem 5.1 in \cite{doi:10.1137/130932132}, we have
\begin{align}
    \mathcal{E}_1 \le \frac{2e^{\lambda C^2} \|\nu \|_{TV}}{e^{2\pi C/h}-1}.
\end{align}
\textbf{Step 2.} For any given $b = -B,\cdots,B$, we have 
\begin{align}
    \left | \sum_{k = -\infty}^{-K-1} \left( \sum_{j=1}^n a_je^{i(y_j-\mu)\omega_k}\cdot hG_{\lambda}(bh-kh) \right) \right|
    & \lesssim \|\nu \|_{TV} \cdot \sqrt{\frac{\lambda}{\pi}}\int_{-\infty}^{-\Omega} e^{-\lambda(bh-\omega)^2} d\omega\notag\\
    & = \frac{\|\nu \|_{TV}}{\sqrt{\pi}}\cdot \Phi\left(\sqrt{\lambda}(-\Omega-bh)\right).
\end{align}
Similarly, we have 
\begin{align}
    \left | \sum_{k = K+1}^{+\infty} \left( \sum_{j=1}^n a_je^{i(y_j-\mu)\omega_k}\cdot hG_{\lambda}(bh-kh) \right) \right|
    \lesssim \frac{\|\nu \|_{TV}}{\sqrt{\pi}}\cdot \Phi\left(\sqrt{\lambda}(bh-\Omega)\right).
\end{align}
Combining the two inequalities above, we have
\begin{align}
    \mathcal{E}_2 
    &\lesssim \frac{\|\nu \|_{TV}}{\sqrt{\pi}} \left(\Phi\left(\sqrt{\lambda}(-\Omega-Bh)\right)+\Phi\left(\sqrt{\lambda}(Bh-\Omega)\right) \right)\le \sigma,
\end{align}
where the last inequality is due to the definition of $B$ and $\Omega_{win}$.\\
\textbf{Step 3.} 
Straightforward calculation gives
\begin{align}
    \mathcal{E}_3 \le \|W_{cen}\|_{\ell^\infty} \cdot \|G_{\lambda,\Gamma}\|_{\ell^1} \lesssim \sigma\cdot  \sqrt{\frac{\lambda}{\pi}}\int_{-\Gamma h}^{\Gamma h} e^{-\lambda \omega^2} d\omega < \sigma.
\end{align}
\textbf{Step 4.} 
Combining the definition of $\Gamma$, see (\ref{def:gamma}), and the assumption on $\gamma$, we have
\begin{align}
    \mathcal{E}_4
    &\le \|Y_{cen}\|_{\ell^\infty} \cdot \|(\boldsymbol{G}_{\lambda}-G_{\lambda,\Gamma})\|_{\ell^1}\notag \\
    &\lesssim 2\| \nu \|_{TV} \cdot \sqrt{\frac{\lambda}{\pi}}\int_{\Gamma h}^{+\infty} e^{-\lambda \omega^2}d\omega \notag\\
    &\le 2\| \nu \|_{TV} \cdot \sqrt{\frac{\lambda}{\pi}}\int_{\Gamma h}^{+\infty}  \frac{\omega}{\Gamma h}\cdot e^{-\lambda \omega^2}d\omega \notag \\
    & =\frac{\| \nu \|_{TV}}{\sqrt{\lambda\pi}} \cdot \frac{e^{-\lambda(\Gamma h)^2}}{\Gamma h} \le \frac{\| \nu \|_{TV}}{\sqrt{\pi}}\cdot\frac{\gamma}{\sqrt{-\ln \gamma}}\le \sigma
\end{align}
Combining the discussion above, we get the estimation as in (\ref{e1+e2+e3+e4}).
\end{proof}
\begin{rmk}
    In practice, we pick the Gaussian parameter $\lambda \sim 10^2$. Notice that the sampling step size $h$ is small due to the Nyquist-Shannon sampling for wide-band signal. Then, for $C = 0.1$, one can see that $\mathcal{E}_1$ is sufficiently small and we can conclude that $\mathcal{E}_{total}\lesssim \sigma$.
\end{rmk}
\begin{rmk}
    As pointed in Section \ref{subsec: scan-music}, in practice, the choice $\gamma\sim\mathcal{O}(\sigma)$ can result in stable reconstruction.
\end{rmk}

\subsection{Auxiliary Lemmas}
\begin{lem} \label{lem: gauss-conv}
    For fixed $\xi \in \Real$, we have
    \begin{align}
        e^{i\xi\omega} \ast G_{\lambda}(\omega) = e^{-\frac{\xi^2}{4\lambda}}\cdot e^{i\xi\omega}
    \end{align}
\end{lem}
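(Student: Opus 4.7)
The plan is to compute the convolution directly by pulling the $\omega$-dependent oscillation out of the integral and then recognizing the remaining integral as the Fourier transform of a Gaussian. Writing
\begin{align*}
(e^{i\xi\cdot}\ast G_{\lambda})(\omega)
= \int_{\Real} e^{i\xi(\omega-\eta)} G_{\lambda}(\eta)\,d\eta
= e^{i\xi\omega}\int_{\Real} e^{-i\xi\eta} G_{\lambda}(\eta)\,d\eta,
\end{align*}
the whole task reduces to showing that
$\int_{\Real} e^{-i\xi\eta}G_{\lambda}(\eta)\,d\eta = e^{-\xi^{2}/(4\lambda)}$.

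For this I would substitute the explicit expression $G_{\lambda}(\eta)=\sqrt{\lambda/\pi}\,e^{-\lambda\eta^{2}}$ and complete the square in the exponent,
\begin{align*}
-\lambda\eta^{2}-i\xi\eta
= -\lambda\!\left(\eta+\tfrac{i\xi}{2\lambda}\right)^{\!2} - \tfrac{\xi^{2}}{4\lambda}.
\end{align*}
Factoring out the $\eta$-independent constant $e^{-\xi^{2}/(4\lambda)}$ leaves a Gaussian integral along the horizontal line $\Im\eta = \xi/(2\lambda)$; a standard contour-shift argument (justified by Cauchy's theorem together with the superexponential decay of $e^{-\lambda\zeta^{2}}$ as $|\Re\zeta|\to\infty$ in any horizontal strip) brings it back to the real axis, where it evaluates to $\sqrt{\pi/\lambda}$. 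Multiplying by the normalizing factor $\sqrt{\lambda/\pi}$ cancels the prefactor and yields exactly $e^{-\xi^{2}/(4\lambda)}$, which combined with the $e^{i\xi\omega}$ outside the integral gives the desired identity.

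No genuine obstacle is expected here; the lemma is essentially the fact that $G_{\lambda}$ is an eigenfunction-generator for convolution against complex exponentials, which is just the statement that the Fourier transform of a Gaussian is a Gaussian (already recorded in the excerpt as $\mathcal{F}G_{\lambda}(x)=e^{-x^{2}/(4\lambda)}$). The only point requiring a line of care is the justification of the contour shift when completing the square with an imaginary displacement, but this is entirely routine.
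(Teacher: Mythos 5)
Your proposal is correct, and it reaches the identity by a more elementary route than the paper. The paper's proof is a three-line distributional Fourier argument: it writes the convolution as $\mathcal{F}\bigl(\mathcal{F}(e^{i\xi\omega}\ast G_{\lambda})\bigr)$, uses the convolution theorem together with $\mathcal{F}(e^{i\xi\omega})=2\pi\delta_{\xi}$ and the already-recorded fact $\mathcal{F}G_{\lambda}(x)=e^{-x^{2}/(4\lambda)}$, and then transforms back. That is shorter but implicitly invokes tempered-distribution machinery, since $e^{i\xi\omega}$ is neither in $L^{1}$ nor $L^{2}$ and its Fourier transform only exists as a Dirac mass; it also glosses over the fact that with the paper's normalization $\mathcal{F}$ is not literally its own inverse. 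Your direct computation — pulling $e^{i\xi\omega}$ out of the convolution integral, completing the square, and shifting the contour back to the real axis — avoids distributions entirely and is fully rigorous at the level of classical integrals; the only price is the (routine, and correctly flagged) Cauchy-theorem justification of the contour shift. Both arguments ultimately rest on the same fact that the Fourier transform of a Gaussian is a Gaussian, so either is acceptable here.
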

\begin{proof}By direct calculation, we have
\begin{align}
    e^{i\xi\omega} \ast G_{\lambda}(\omega) 
    &= \mathcal{F} \left(\mathcal{F}\left( e^{i\xi\omega} \ast G_{\lambda}(\omega)  \right) \right)\notag\\
    &= \mathcal{F}( 2\pi \delta_{\xi}\cdot e^{-\frac{x^2}{4\lambda}} )\notag\\
    &=e^{-\frac{\xi^2}{4\lambda}}\cdot e^{i\xi\omega}
\end{align}
\end{proof}

\begin{lem}\label{lem: gauss-ineq}
    For $x \in \Real$, we have
    \begin{align}
        \Phi (-x) < \frac{e^{-x^2}}{2x}.
    \end{align}
\end{lem}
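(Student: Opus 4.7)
The plan is to reduce the bound to a one-line calculus estimate using the substitution $t \mapsto -t$ and a standard trick for Gaussian tails. First I would observe that for $x>0$ the change of variables gives
\begin{align*}
\Phi(-x) \;=\; \int_{-\infty}^{-x} e^{-t^2}\,dt \;=\; \int_{x}^{+\infty} e^{-t^2}\,dt,
\end{align*}
so the claim is equivalent to the Mills-ratio style bound $\int_x^{\infty} e^{-t^2}\,dt < \frac{e^{-x^2}}{2x}$.

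Next I would exploit the fact that on the interval of integration the ratio $t/x$ is strictly greater than $1$ (except at the endpoint $t=x$), and that $t\,e^{-t^2}$ has an elementary antiderivative. Multiplying the integrand by $t/x \ge 1$ gives
\begin{align*}
\int_{x}^{+\infty} e^{-t^2}\,dt \;<\; \int_{x}^{+\infty} \frac{t}{x}\,e^{-t^2}\,dt \;=\; \frac{1}{x}\Bigl[-\tfrac12 e^{-t^2}\Bigr]_{x}^{+\infty} \;=\; \frac{e^{-x^2}}{2x}.
\end{align*}
The inequality is strict because $t/x>1$ holds on a set of positive measure, so the bound on the integrand is not an equality almost everywhere.

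Finally I would note that the statement as written presumes $x>0$, since the right-hand side is otherwise negative or undefined; no case analysis is needed beyond this restriction. The main (and only) obstacle is really just the recognition of the $t/x$ trick, which is entirely standard; there is no delicate estimate to balance, and the whole argument fits in a few lines.
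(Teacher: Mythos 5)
Your proof is correct and is essentially identical to the paper's own argument: the same reflection $\Phi(-x)=\int_x^\infty e^{-t^2}\,dt$ followed by the same $t/x\ge 1$ majorization and explicit antiderivative of $t e^{-t^2}$. Your added remark that the bound implicitly requires $x>0$ is a fair observation, since the paper states the lemma for all $x\in\Real$ but the right-hand side is only meaningful for positive $x$.
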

\begin{proof}
Since $\Phi(x)$ is an even function, we have
\begin{align} 
    \Phi (-x) = \int_{-\infty}^{-x} e^{-t^2} dt = \int_{x}^{\infty} e^{-t^2} dt 
    < \int_{x}^{\infty}\frac{t}{x} \cdot e^{-t^2} = \frac{e^{-x^2}}{2x}.
\end{align}
    
\end{proof}

\subsection{Auxiliary Algorithms}
We present the auxiliary algorithms used in SCAN-MUSIC and SCAN-MUSIC(C) in this section.
\begin{algorithm}
	\caption{Centralization and Gaussian windowing ($CGM$)}
	\label{algo_gauss}
	\DontPrintSemicolon
    \SetKwInOut{Input}{Input}
    \SetKwInOut{Output}{Output}
    \SetKwInOut{Init}{Initialization}
	\Input{Original measurement: $Y$, sample interval: $[-\Omega,\Omega]$, sampling step size: $h$.}
    \Input{Gaussian parameter: $\lambda$, Gaussian center: $\mu$, truncation: $\gamma$}
		$grid \gets -\Omega:h:\Omega$ \;
		$G_{pre} \gets \exp(-\lambda\cdot grid.^{\wedge} 2)$\;
		$G \gets G_{pre}(G_{pre}>\gamma)$\;
		$Y_{center} \gets Y .\ast \exp(-i\mu\cdot grid)$\;
        $Y_{win} \gets conv(Y_{center},G,'\text{valid}')$\;
	\Return $Y_{win}$.\;
\end{algorithm}

\begin{algorithm}
	\caption{Subsampling for random line spectra ($Sub_1$)}
	\label{algo_sub1}
	\DontPrintSemicolon
    \SetKwInOut{Input}{Input}
    \SetKwInOut{Output}{Output}
    \SetKwInOut{Init}{Initialization}
	\Input{Modulated measurement: $Y_{win}$, bandwidth after windowing $\Omega_{\gamma}$, sampling step size: $h$.}
    \Input{subsampling factor: $F_{sub}$}
		$index_{pre} \gets 0:[\frac{|Y_{win}|}{F_{sub}}]-1 $\;
        $index \gets F_{sub}\cdot index_{pre}+1$\;
		$Y_{sub} \gets Y_{win}(index)$\;
		$h_{sub} \gets h\cdot F_{sub}$\;
	\Return $Y_{sub}$, $h_{sub}$.\;
\end{algorithm}

\begin{algorithm}
	\caption{Subsampling for Clustered line spectra ($Sub_2$)}
	\label{algo_subsample_cluster}
	\DontPrintSemicolon
    \SetKwInOut{Input}{Input}
    \SetKwInOut{Output}{Output}
    \SetKwInOut{Init}{Initialization}
	\Input{Modulated measurement: $Y_{win}$, Cluster center to be filtered: $y_{center}$, sampling step size: $h$}
    \Input{Corresponding pole order: $Ord$, spectra number in single cluster: $N_0$} 
    \tcc{$Ord$ should be a vector with the same length as $y_{center}$.}
        $M = sum(Ord)+1$\;
        $F_{sub} \gets \lceil\frac{|Y_{win}|}{2*N_0+M}\rceil$ \;
        $index \gets 1:F_{sub}:|Y_{win}| $ \;
        $Y_{sub} \gets Y_{win}(index)$, $h_{sub} \gets h\cdot F_{sub}$ \;
	\Return $Y_{sub}$, $h_{sub}$\;
\end{algorithm}

\begin{algorithm}
	\caption{Annihilating Filter Based spectra Removal ($AFSR$)}
	\label{algo_filter}
	\DontPrintSemicolon
    \SetKwInOut{Input}{Input}
    \SetKwInOut{Output}{Output}
    \SetKwInOut{Init}{Initialization}
	\Input{Subsampling measurement: $Y_{sub}$, Cluster center to be filtered: $y_{center}$, sampling step size: $h$} 
    \Input{Corresponding pole order: $Ord$} 
        \For{$t\ =\ 1:|y_{center}|$}{
            Construct $F_t$ based on (\ref{filter frequency domain}) 
            }
        $F \gets F_1 \ast \cdots \ast F_T$ \;
        $Y_{filt} \gets conv(Y_{sub},F,'valid')$ \; 
	\Return $Y_{filt}$\;
\end{algorithm}

\newpage
\subsection{Notation for Parameters}\label{subsec: notations}
\begin{table}[ht]
    \centering
    \begin{tabular}{cc}
   \toprule
   Parameters & Notation \\
   \midrule
   Effective cutoff frequency & $\Omega_{win}$\\
   Gaussian Parameter & $\lambda$ \\
   Gaussian Center & $\mu$ \\
   Truncation level for Gaussian window & $\gamma$ \\
   Truncation index for Gaussian window &  $\Gamma$ \\
   Trust level & $\kappa_T$\\
   Trust bound & $R_{tru}$\\
   Essential level & $\kappa_E$\\
   Essential bound & $R_{ess}$\\
   Subsampling factor & $F_{sub}$\\
   Measurement after centralization and Gaussian windowing & $Y_{win}$\\
   $Y_{win}$ further processed by subsampling & $Y_{sub}$\\
   $Y_{sub}$ further processed by filtering & $Y_{filt}$\\
   \bottomrule
\end{tabular}
    \caption{Table for parameters and notations}
    \label{Tab: notations}
\end{table}

\newpage
\bibliographystyle{ieeetr}
\bibliography{references} 
\newpage

\end{document}